\crefname{hypothesis}{Hypothesis}{Hypotheses}
\crefname{fact}{Fact}{Facts}
\title{Partially Ordered Sets Corresponding to the Partition Problem\thanks{An extended abstract version of this paper appeared in Proceedings of the 25th International Symposium on Fundamentals of Computation Theory, 2025, \url{https://doi.org/10.1007/978-3-032-04700-7_23}.
\funding{This work was partially supported by MEXT Leading Initiative for Excellent Young Researchers Grant Number JPMXS0320200347 and JSPS KAKENHI Grant Number JP26K17039.}}}
\author{Susumu Kubo\thanks{Faculty of Informatics, Showa Women's University, Tokyo, Japan
  (\email{s-kubo@swu.ac.jp}).}
}
\crefname{example}{Example}{Examples}
\newcommand{\mb}[1]{\mathbf{#1}}
\newcommand{\pl}{\leqslant}
\begin{document}

\maketitle

\begin{abstract}
The partition problem is a well-known NP-complete problem. 
We focus on its optimization version.  
We propose two partially ordered sets (posets) corresponding to the partition problem and develop an order-theoretic framework for solving it. The first poset is order-isomorphic to a well-known poset whose structure is related to solutions of the subset sum problem, while the second is a subposet of the first and plays a crucial role in this paper. The partial order characterizes the dominance relations between subsets that hold uniformly across all instances.
We first show several properties of the two posets, such as size, height, and width (the size of the largest antichain, i.e., the largest set of pairwise incomparable elements). The two posets have the same width, which is $\Theta(2^n / n^{3/2})$ for $n$ congruent to $0$ or $3$ modulo $4$; this exponential width indicates the hardness of the partition problem.  
We then prove that the initial candidate solutions are the elements of the second poset, whose size is $2^{n} - 2 \binom{n}{\lfloor n/2 \rfloor}$. Since a partition corresponds to two elements of the poset, the number of initial candidate partitions is half of that, i.e., $2^{n-1} - \binom{n}{\lfloor n/2 \rfloor}$.
We prove that the candidate solutions can be further reduced based on the partial order, and we establish a necessary and sufficient condition, phrased in terms of the second poset, for a subset to attain the optimal value.
Building on this optimality criterion, we finally derive several polynomially solvable cases from the structure of the second poset. 
Our approach offers a useful tool for structural analysis of the partition problem.
\end{abstract}

\begin{keywords}
partial order, partition problem, subset sum problem, combinatorial optimization, Sperner property
\end{keywords}

\begin{MSCcodes}
06A06, 
06A07, 
90C27, 
68Q25 
\end{MSCcodes}

\section{Introduction}\label{sec1}
The partition problem is formulated as follows:
given positive integers $c_1, \ldots, c_n$, the objective is to decide whether there exists a subset $S$ of $\{1, \ldots, n\}=:[n]$ such that $\Delta(S) = 0$, where 
\begin{equation}\label{eq.delta}
\Delta(S):=
\sum_{i \in S}c_i - \sum_{i \in [n] \setminus S}c_i. 
\end{equation}
In this paper, we focus on its optimization version: 
to find a subset $S$ of $[n]$ that minimizes the absolute value $|\Delta(S)|$. 
Without loss of generality, we index the integers in nonincreasing order ($c_1 \geq c_2 \geq \cdots \geq c_n$). 
Although the $c_i$'s are originally positive, in what follows we only require $c_n \geq 0$; allowing zero entries loses no generality and is convenient because some proofs use instances with zero entries.

The partition problem is one of Karp's 21 NP-complete problems \cite{karp1972}
and is also one of Garey and Johnson's six basic NP-complete problems \cite{Garey1979}.
Therefore, the optimization version is known to be NP-hard.
It has important applications in task scheduling \cite{Coffman1991,Tsai1992}. In fact, it is equivalent to makespan minimization on two identical parallel machines.
 
The problem has been investigated from various perspectives. 
Several optimal algorithms give exact solutions in time exponential in $n$ \cite{Garey1979,Horowitz1974,Korf1998,Schroeppel1981}.
On the pseudopolynomial side, the classical dynamic programming approach has been improved to near-linear time in the total input sum \cite{Bringmann2017,Koiliaris2019}.
Moreover, a variety of heuristic and metaheuristic algorithms have been developed \cite{Alidaee2005,Fuksz2013,Johnson1991,Karmarkar1982,Ruml1996}. 
An ``easy-hard'' phase transition was observed and the solution structure has been studied from the point of view of statistical mechanics \cite{borgs2001,Gent1996,Hartmann2008,Hayes2002,Mertens1998,Mertens2006}. 
An algebraic expression was presented based on max-plus algebra \cite{kubo2020}. 

Furthermore, the number of solutions to the subset sum problem, a special case of which is the partition problem, has been studied. 
The subset sum problem with positive inputs is to decide whether there exists a subset $S$ of $[n]$ such that
\[
\sum_{i \in S} c_i = t,
\]
where $t$ is a given constant (the target). 
Let $\nu(\{c_1, \ldots, c_n\},t)$ denote the number of solutions. 
It is known that, if $c_1, \ldots, c_n$ are distinct positive integers, then
\[
\nu \left( \{c_1, \ldots, c_n\},t \right) \leq 
\nu \left([n], \left\lfloor \frac{n(n+1)}{4} \right\rfloor \right)
\]
\cite{Lindstrom1970,Stanley1991}. 
In fact, $\nu \left([n], \lfloor n(n+1)/4 \rfloor \right)$ is sequence A025591 in the On-Line Encyclopedia of Integer Sequences \cite{OEIS} and is $\Theta(2^n / n^{3/2})$ for $n$ congruent to $0$ or $3$ modulo $4$ \cite{Sullivan2013}.
The inequality is established using the properties of a partially ordered set (poset).

The poset is a well-known poset, denoted by $M(n)$ in Stanley's paper \cite[Subsection 4.1.2]{Stanley1991}. 
 The elements of $M(n)$ are all the subsets of $[n]$. Let $A$ and $B$ be two subsets of $[n]$ with elements $a_1, a_2, \ldots, a_j$ and $b_1, b_2, \ldots, b_k$, respectively, and assume that the elements of $A$ and $B$ are ordered in decreasing order. The partial order is defined as follows: $A \pl B$ if and only if $j \leq k$ and $a_i \leq b_i$ for $i \in [j]$. 

In this paper we develop a poset-based approach to the optimization version of the partition problem; to the best of our knowledge, posets have not previously been applied to the problem in this way. 
\subsection{Main Results}\label{ssec:main}
To define two posets that correspond to the partition problem, 
we first consider the set $\mathbb{R}^n$ of $n$-
dimensional vectors of real numbers. 
Let $\mb{x}=(x_1, \ldots, x_n)\in \mathbb{R}^n$ and $\mb{y}=(y_1, \ldots, y_n)\in \mathbb{R}^n$, 
and define a partial order on the set:
$\mb{x} \preceq \mb{y}$
if and only if
$x_1 + \cdots + x_k \leq y_1 + \cdots + y_k$ for all $k\in[n]$.
In other words, $\mb{x} \preceq \mb{y}$ means that any entry of the cumulative sum of $\mb{x}$ is less than or equal to the corresponding one of $\mb{y}$. This comparison of cumulative sums is reminiscent of the order relations studied in the theory of majorization \cite{Marshall2011}.

Let us define a poset $P(n)$ that corresponds to the partition problem. 
Let $P(n)$ be the set of all $n$-dimensional vectors with entries in $\{1, -1\}$. The size of $P(n)$ is of course $2^n$. 
The partial order is defined to be the restriction of $\preceq$ to $P(n)$, denoted by $\preceq_P$. The poset is order-isomorphic to $M(n)$ under the mapping $f$ from $P(n)$ to $M(n)$ defined by the following rule: 
Given $\mb{v}=(v_1, \dots, v_n) \in P(n)$, we define $f(\mb{v})$ by $i \in f(\mb{v})$ if and only if $v_{n+1-i
}=1$. 

\begin{example}
The Hasse diagrams of $M(5)$ and $P(5)$ are shown in \Cref{fig_M5,fig_P5}, respectively. 
We can see that the two posets are order-isomorphic under $f$. 
For example, $f((\phantom{-}1,-1,\phantom{-}1,-1,-1))= \{5, 3\}$. 
Note that a partial order is shown horizontally due to space limitations, though it is usually shown vertically.
The horizontal position indicates rank (see \Cref{ssec.preli}).
\end{example}

\begin{remark}
An element of $P(n)$ can also be regarded as a lattice path on the integers that starts at $0$ and at each step moves $+1$ or $-1$.
The relation $\mb{v} \preceq_P \mb{w}$ means that the path corresponding to $\mb{v}$ does not ``exceed'' the one corresponding to $\mb{w}$. \Cref{fig_path} shows an example.

\begin{figure}[ht]
  \centering
    \includegraphics[scale = 0.76]{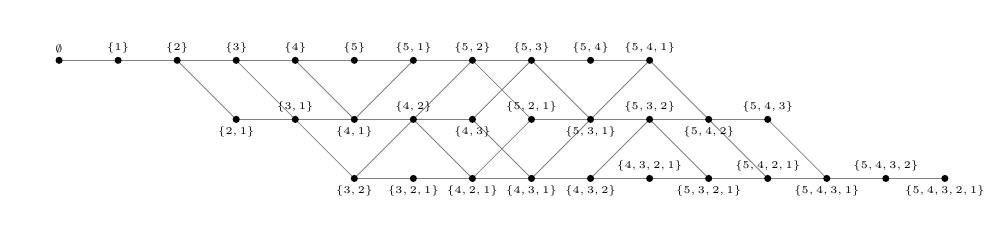}
    \vspace{-3mm}
  \caption{The Hasse diagram of $M(5)$.}
  \label{fig_M5}
\end{figure}
\begin{figure}[H]
  \centering
  \includegraphics[scale = 0.76]{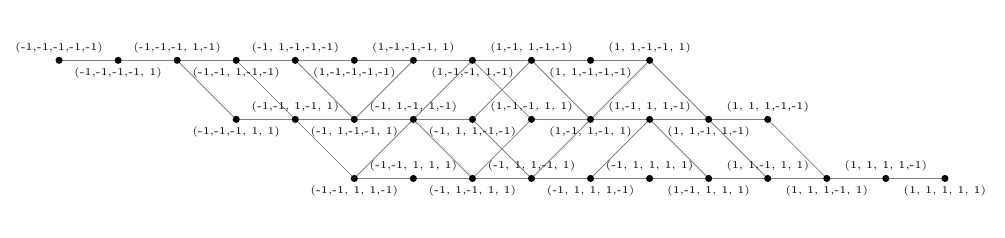}
   \vspace{-3mm}
  \caption{The Hasse diagram of $P(5)$.}
  \label{fig_P5}
\end{figure}

\begin{figure}[ht]
  \centering
    \includegraphics[height=4cm]{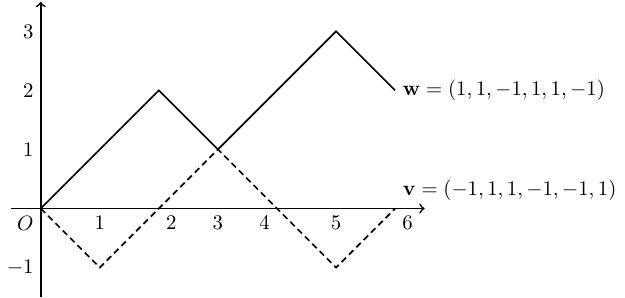}
  \caption{The relation $\mb{v}\preceq_P \mb{w}$ in $P(6)$ is shown as lattice paths.}
  \label{fig_path}
\end{figure}
\end{remark}

Let us define another poset $Q(n)$ to be the subset of $P(n)$ obtained by deleting from $P(n)$ the elements that are less than or greater than the zero vector, that is,
$
Q(n) = P(n) \setminus \{ {\mb{v} \in P(n)} \colon {\mb{v}\preceq \mb{0}} \allowbreak\text{ or } {\mb{0} \preceq \mb{v}}\}, 
$
where $\mb{0}=(0, \dots, 0)$. 
The partial order is defined to be the restriction of $\preceq_P$ to $Q(n)$, denoted by $\preceq_Q$.
The poset $Q(n)$ is obtained by truncating both the upper and lower parts of $P(n)$. 
The size of $Q(n)$ is proved to be $2^{n} - 2 \binom{n}{\lfloor n/2 \rfloor}$. 

\begin{example}
Let us consider $Q(n)$ for $n \leq 6$. We can easily see that $Q(1)=Q(2)=\emptyset$ and $Q(3)=\{(1,-1,-1), (-1,1,1)\}$, the elements of which are incomparable. 
The Hasse diagrams of $Q(4)$, $Q(5)$, and $Q(6)$ are shown in \Cref{fig_Q45}.

\begin{figure}[ht]
\renewcommand{\figurename}{}
\renewcommand{\thefigure}{}
  \begin{minipage}[b]{0.3\linewidth}
    \centering
    \includegraphics[scale = 1]{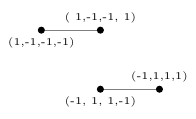}
    \centering
    \caption{\hspace{-2mm}\textcolor{white}{$\blacksquare$}$Q(4)$}
  \end{minipage}
  \hfill
  \begin{minipage}[b]{0.65\linewidth}
    \centering
    \includegraphics[scale =1]{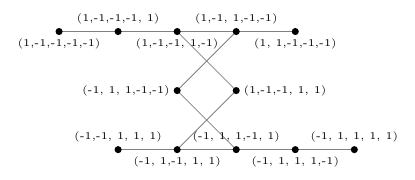}
    \caption{\hspace{-2mm}\textcolor{white}{$\blacksquare$}$Q(5)$}
  \end{minipage}
  \begin{minipage}[b]{\linewidth}
      \centering
    \includegraphics[scale =.95]{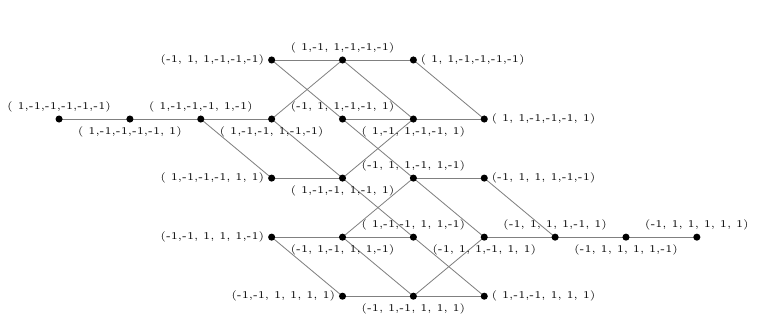}
    \caption{\hspace{-2mm}\textcolor{white}{$\blacksquare$}$Q(6)$}
  \end{minipage}
\renewcommand{\figurename}{Fig.}
\renewcommand{\thefigure}{\arabic{figure}}
\addtocounter{figure}{-3} 
  \caption{The Hasse diagrams of $Q(4)$, $Q(5)$, and $Q(6)$.}
  \label{fig_Q45}
\end{figure}

\end{example}

To investigate the relationship between the partition problem and the posets $P(n)$ and $Q(n)$, 
we rewrite the difference (\ref{eq.delta}) of the problem.
There is a one-to-one correspondence between $P(n)$ and the set of all subsets of $[n]$.
We introduce a mapping $\mb{p}$ from $2^{[n]}$ to $P(n)$. 
To each subset $S$ of $[n]$, we assign $\mb{p}(S)=(p_1, \dots, p_n)$ defined by 
\[
p_i =
\begin{cases}
  \phantom{-}1 & \text{ if }\quad i \in S,\\
  -1 & \text{ if }\quad i \notin S.
\end{cases}
\]
Let $\mb{c}= (c_1, c_2, \ldots, c_n)$. 
Then
\[
\Delta(S)=\mb{p}(S)\cdot \mb{c},
\]
where the dot indicates the usual inner product.
Note that $\mb{p}([n]\setminus S) = -\mb{p}(S)$ and hence $\Delta([n]\setminus S)= - \mb{p}(S)\cdot \mb{c}$, the absolute value of which is equal to  $\lvert \Delta(S) \rvert$. 
A pair of two elements $\mb{p}(S)$ and $-\mb{p}(S)$ corresponds to the partition $\{S, \;[n]\setminus S\}$.

The partial order now acquires a concrete meaning in optimization: $\mb{p}(S)\cdot \mb{c} \leq \mb{p}(S')\cdot \mb{c}$ holds for every instance $\mb{c}$ if and only if $\mb{p}(S) \preceq_P \mb{p}(S')$ (\Cref{prop:d&order}). In contrast to the instance-dependent dominance relations commonly used in enumerative algorithms, the posets $P(n)$ and $Q(n)$ thus encode exactly the instance-independent dominance structure of the partition problem.

We show that it suffices to consider $Q(n)$ for solving the partition problem.
Let us call a subset $S$ of $[n]$ \textit{dispensable} if there exists a subset $S'$ of $[n]$, not equal to $S$ or $[n]\setminus S$, such that $\lvert \Delta(S')\rvert \leq \lvert \Delta(S)\rvert$ for every instance $\mb{c}$; an optimal solution can then always be found without examining $S$. The \textit{initial candidate solutions} are defined as the subsets that are not dispensable. This set is determined independently of any algorithm.
A \textit{candidate solution} is a subset that remains under consideration at a given stage of the optimization process.

\begin{theorem}[Characterization of Initial Candidate Solutions]\label{th:sol}
    The initial candidate solutions to the partition problem are precisely the subsets corresponding to the elements of $Q(n)$. 
\end{theorem}
The number of initial candidate partitions is half of the size of $Q(n)$, i.e., $2^{n-1} - \binom{n}{\lfloor n/2 \rfloor}$ (sequence A093387 in \cite{OEIS}), since a partition comprises a subset and its complement.
Furthermore, candidate solutions can be reduced based on the partial order of $Q(n)$. 

\begin{theorem}[Order-Based Pruning of Candidates]\label{th:signQ}
  Given an instance $\mb{c}$ of the partition problem, the following holds for every $\mb{v} \in Q(n)$:
 \begin{enumerate}
    \item If $\mb{v}\cdot \mb{c} \geq 0$,
  then $\lvert \mb{v}\cdot \mb{c}\rvert \leq \lvert \mb{w}\cdot \mb{c}\rvert$ for every $\mb{w} \in Q(n)$ that satisfies $\mb{w} \preceq_Q -\mb{v}$ or $\mb{v} \preceq_Q \mb{w}$.
  \item If $\mb{v}\cdot \mb{c} \leq 0$,
then $\lvert \mb{v}\cdot \mb{c}\rvert \leq \lvert \mb{w}\cdot \mb{c}\rvert$ for every $\mb{w} \in Q(n)$ that satisfies $\mb{w} \preceq_Q \mb{v}$ or $-\mb{v} \preceq_Q \mb{w}$.
    \end{enumerate}
\end{theorem}

Building on this pruning, we establish a necessary and sufficient condition for an element of $Q(n)$ to attain the optimal value. To state it, we call an element $\mb{v} \in Q(n)$ \emph{regular} with respect to $\mb{c}$ if $\mb{u}\cdot \mb{c} < \mb{v}\cdot \mb{c}$ for every $\mb{u} \in Q(n)$ with $\mb{u} \prec_Q \mb{v}$. Since $\mb{u} \preceq_Q \mb{v}$ implies $\mb{u}\cdot \mb{c} \leq \mb{v}\cdot \mb{c}$ (\Cref{prop:d&order}), regularity means that no element below $\mb{v}$ shares its inner product with $\mb{c}$.

\begin{theorem}[Optimality Criterion]\label{th:opt}
Given an instance $\mb{c}$ of the partition problem, for $\mb{v} \in Q(n)$ the value $\mb{v}\cdot \mb{c}$ is the optimal value if and only if $\mb{v}\cdot \mb{c} = 0$, or there exists a regular element $\mb{w} \in Q(n)$ with $\mb{w} \preceq_Q \mb{v}$ and $\mb{w}\cdot \mb{c} = \mb{v}\cdot \mb{c}$ such that:
\begin{romannum}
    \item $\mb{w} \cdot \mb{c} > 0$;
    \item $(\mb{w} + \mb{w}_1)\cdot \mb{c} \leq 0$ for every element $\mb{w}_1$ of $Q(n)$ covered by $\mb{w}$, if any;
    \item $(\mb{w} + \mb{w}_2)\cdot \mb{c} \leq 0$ or  $(\mb{w} - \mb{w}_2)\cdot \mb{c} \leq 0$ for every element $\mb{w}_2$ of $Q(n)$ incomparable to both $\mb{w}$ and $-\mb{w}$, if any.
\end{romannum}
\end{theorem}

Since $\mb{w}$ covers at most $n$ elements, the size of the criterion in \Cref{th:opt} is governed by the number of elements incomparable to $\pm\mb{w}$; when this number is small, optimality can be decided in polynomial time. The simplest cases are the minimal and maximal elements themselves.

Let $\ell = \lfloor (n-1)/2 \rfloor$. For $k \in \{0, 1, \dots, \ell \}$ we denote by $\mb{m}_k$ the element of $Q(n)$
\begin{equation*}\label{eq_max}
(\underbrace{1, \ldots, 1}_{k}, \underbrace{-1, \ldots, -1}_{k+1}, 1, \ldots, 1).
\end{equation*}
Thus, $-\mb{m}_k$ stands for the element of $Q(n)$
\begin{equation*}\label{eq_min}
(\underbrace{-1, \ldots, -1}_{k}, \underbrace{1, \ldots, 1}_{k+1}, -1, \ldots, -1).
\end{equation*}
In fact, $-\mb{m}_k$ and $\mb{m}_k$ are the minimal and maximal elements, respectively. 

\begin{corollary}[Polynomially Solvable Case for Minimal Elements]\label{th:min}
Let $-\mb{m}_k$ be a minimal element of $Q(n)$, where $k \in \{0, 1, \dots, \ell \}$. Given an instance $\mb{c}$ of the partition problem, the value $-\mb{m}_k \cdot \mb{c}$ is the optimal value if and only if $-\mb{m}_k \cdot \mb{c} \geq 0$; in that case the subset $S$ with $\mb{p}(S)=-\mb{m}_k$ is an optimal solution.
\end{corollary}

\Cref{th:min} is a generalization of the obvious fact that the subset $\{1\}$ is an optimal solution (with optimal value $c_1 - (c_2 + \cdots + c_n)$ when $c_1 - (c_2 + \cdots + c_n) \geq 0$), corresponding to the minimal element $-\mb{m}_0 = (1, -1, \dots, -1)$.

Using the addition and swap operators $\mathcal{A}^{(k)}$ and $\mathcal{S}^{(j,k)}$ defined in \Cref{sec2}, which map an element to a larger one,
we immediately obtain the following corollary for the maximal elements. 

\begin{corollary}[Polynomially Solvable Case for Maximal Elements]\label{cor:maximal}
Let $\mb{m}_k$ be a maximal element of $Q(n)$, where $k \in \{0, 1, \dots, \ell \}$. Given an instance $\mb{c}$ of the partition problem, if the following inequalities hold, then $\mb{m}_k \cdot \mb{c}$ is the optimal value and the subset $S$ with $\mb{p}(S)=\mb{m}_k$ is an optimal solution:
\begin{romannum}
\item $\mb{m}_k \cdot \mb{c} \geq 0$;
\item $\left(\mb{m}_k -\mathcal{S}^{(k,k+1)}(-\mb{m}_k)\right)\cdot \mb{c} \leq 0$ (applicable for  $k\neq 0$);
\item $\left(\mb{m}_k-\mathcal{A}^{(n)}(-\mb{m}_k)\right)\cdot \mb{c} \leq 0$ (applicable for $k\neq (n-1)/2$).
\end{romannum}
\end{corollary}

Unlike \Cref{th:min}, only sufficiency is asserted here: a maximal element need not be regular, and its full necessary and sufficient condition is the general one of \Cref{th:opt}. Conditions (ii) and (iii) of \Cref{cor:maximal}, which compare $\mb{m}_k$ with elements obtained by the swap and addition operators, would be hard even to formulate without the poset $Q(n)$.

Moreover, the same criterion yields a polynomial-time optimality test for an explicitly described family of elements near the minimal and maximal ones, for which it reduces to a constant number of inequalities (\Cref{sec3}, \Cref{cor:extended}).

Our approach provides a new tool for the structural analysis of the partition problem. Two formulation choices essential to it---encoding subsets by $\{1,-1\}$-vectors and keeping the sign-opposite pairs $\{\mb{v},-\mb{v}\}$ rather than halving the candidates---are discussed in \Cref{sec4}.

\subsection{Organization of the Paper}
The paper is organized as follows. 
In \Cref{sec2} we review basic poset concepts and the properties of $P(n)$ and $Q(n)$. 
In \Cref{sec3} we explore the relationship between the partition problem and these posets; we prove our three main theorems and derive the polynomially solvable cases (\Cref{cor:extended}).
Finally, \Cref{sec4} concludes the paper with remarks and open problems.

\section{Two Posets Corresponding to the Partition Problem}\label{sec2}
\subsection{Preliminaries}\label{ssec.preli}
In this subsection, we review some standard facts on posets and give some properties of the poset $M(n)$. 

Let $(P, \preceq)$ be a finite poset. We simply denote the poset by $P$ and define the relation $\prec$ by $x \prec y$ if and only if $x \preceq y$ and $x \neq y$. 
An element $y$ {\it covers} $x$ if $x \prec y$ and there is no $z \in P$ such that $x \prec z \prec y$. 
A subset of $P$ is a {\it chain} if it is totally ordered, and
a subset of $P$ is an {\it antichain} if no pair of elements in it is comparable.
A subset $F$ of $P$ is an {\it up-set} ({\it down-set}) if $x \in F$ and $x \preceq y$ ($y \preceq x$, resp.) imply $y \in F$; the {\it up-set} ({\it down-set}) {\it of an element} is the set of elements greater (less, resp.) than or equal to it.
The {\it height} ({\it width}) of $P$ is the largest size of a chain (antichain, resp.) in $P$.
An element of $P$ is a {\it maximal} ({\it minimal}) {\it element} if it is not less (greater, resp.) than any other element. 
An element of $P$ is the {\it greatest} ({\it least}) {\it element} if it is greater (less, resp.) than any other element.

A poset $P$ is {\it graded}\footnote{There are several competing definitions for graded posets.} if it is equipped with a {\it rank function} $\rho$ from $P$  to $\mathbb{Z}$ that satisfies:
\begin{enumerate}
  \item if $x \prec y$, then $\rho(x)< \rho(y)$;
  \item if $y$ covers $x$, then $\rho(y) = \rho(x) + 1$.
\end{enumerate}
An element $x \in P$ has {\it rank} $i$ if $\rho(x)=i$. By setting the minimum value of the rank function to zero, any graded poset $P$ can be partitioned into $P = \bigcup_{i=0}^{r} P_i$, where $P_i = \{x \in P \colon \rho(x)=i \}$. We denote the size of $P_i$
by $s_i$. 
The poset $P$ is {\it rank-symmetric} if $s_i = s_{r-i}$ for each $i$.
The poset $P$ is {\it rank-unimodal} if $s_0 \leq s_1 \leq \cdots \leq s_i \geq s_{i+1} \geq \cdots \geq s_r$ for some $i$, and $P$ is {\it Sperner} if the width of $P$ is equal to $\max_i \{s_i\}$. 
The width is lower bounded by $\max_i \{s_i\}$ 
since every $P_i$ is an antichain.
In a Sperner poset, the largest one among $P_0, P_1, \ldots, P_r$ provides an antichain of maximum size.

A poset $P$ is a \textit{lattice} if any two elements $x, y \in P$  have a least upper bound and a greatest lower bound, denoted by $x \vee y$ and $x \wedge y$ respectively. A lattice $L$ is \textit{distributive} if $x \vee (y \wedge z) = (x \vee y) \wedge (x \vee z)$ for each $x, y, z \in L$.

The poset $M(n)$ has the following properties \cite{Proctor1982,Stanley1980}:
\begin{itemize}
  \item The rank function with minimum rank 0 is the sum of elements of a subset.
  The maximum rank is $n(n+1)/2$.
  \item The poset $M(n)$ is rank-symmetric, rank-unimodal, and Sperner.
  Therefore, its width is equal to the size of the middle rank, which is $\nu \left([n], \left\lfloor n(n+1)/4 \right\rfloor \right)$. 
  \item Its height is $n(n+1)/2 + 1$.
  \item The poset $M(n)$ is a distributive lattice with 
  the greatest element $[n]$ and the least element $\emptyset$.
\end{itemize}

\subsection{Properties of the Two Posets}
We now establish the basic properties of $P(n)$ and $Q(n)$.
We first need to become familiar with some elementary properties of the partial order $\preceq$. Recall the partial order defined in \Cref{ssec:main}. Let $\mb{x}=(x_1, \ldots, x_n)\in \mathbb{R}^n$ and $\mb{y}=(y_1, \ldots, y_n)\in \mathbb{R}^n$. 
The relation $\mb{x} \preceq \mb{y}$ holds
if and only if
\begin{align*}
  x_1 &\leq y_1,\\
  x_1 + x_2 &\leq y_1 + y_2,\\
  &\;\; \vdots \\
  x_1 + x_2 + \cdots + x_n &\leq y_1 + y_2 + \cdots + y_n.
\end{align*}
We record three elementary consequences.
\begin{proposition}\label{prop:real}
  For $\mb{x}, \mb{y}, \mb{z} \in \mathbb{R}^n$ and $\alpha \in \mathbb{R}$, the following statements hold: 
  \begin{enumerate}
    \item $\mb{x} \preceq \mb{y} \Longleftrightarrow \mb{x}+\mb{z} \preceq \mb{y}+\mb{z}$.
    \item $\mb{x} \preceq \mb{y}$ and $\,\alpha \geq 0 \Longrightarrow \alpha\mb{x} \preceq \alpha\mb{y}$.
    \item $\mb{x} \preceq \mb{y}$ and $\,\alpha \leq 0 \Longrightarrow \alpha\mb{y} \preceq \alpha\mb{x}$.
  \end{enumerate}
\end{proposition}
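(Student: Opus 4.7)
The plan is to argue directly from the definition of $\preceq$, which rewrites each of $\mb{x}\preceq\mb{y}$ and $\mb{x}+\mb{z}\preceq\mb{y}+\mb{z}$ etc.\ as a conjunction of $n$ scalar inequalities on partial sums, and then invoke only the elementary order axioms of $\mathbb{R}$ (addition preserves $\leq$; multiplication by a non-negative scalar preserves $\leq$; multiplication by a non-positive scalar reverses $\leq$).

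For part (1), I would fix $j \in \{1,\dots,n\}$ and observe that
\[
\sum_{i=1}^{j}(x_i+z_i)=\sum_{i=1}^{j}x_i+\sum_{i=1}^{j}z_i,
\qquad
\sum_{i=1}^{j}(y_i+z_i)=\sum_{i=1}^{j}y_i+\sum_{i=1}^{j}z_i,
\]
so the $j$th defining inequality of $\mb{x}+\mb{z}\preceq\mb{y}+\mb{z}$ is obtained from the $j$th defining inequality of $\mb{x}\preceq\mb{y}$ by adding the real number $\sum_{i=1}^{j}z_i$ to both sides; this operation is reversible (subtract the same number), which gives the ``$\Longleftrightarrow$''.

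For parts (2) and (3), I would again fix $j$ and start from $\sum_{i=1}^{j}x_i\leq\sum_{i=1}^{j}y_i$. Multiplying both sides by $k\geq 0$ preserves the inequality and yields $\sum_{i=1}^{j}kx_i\leq\sum_{i=1}^{j}ky_i$, i.e., the $j$th defining inequality for $k\mb{x}\preceq k\mb{y}$; multiplying both sides by $k\leq 0$ reverses the inequality and yields $\sum_{i=1}^{j}ky_i\leq\sum_{i=1}^{j}kx_i$, i.e., the $j$th defining inequality for $k\mb{y}\preceq k\mb{x}$. Since $j$ was arbitrary, all $n$ defining inequalities hold simultaneously in each case.

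Honestly, there is no genuine obstacle here: the proposition is simply the statement that the $n$ cumulative-sum inequalities behave individually like ordinary inequalities in $\mathbb{R}$, and the only ``care'' required is the standard sign reversal in (3). The value of the proposition is purely as a reusable lemma for the later arguments about $P(n)$ and $Q(n)$, so the expected write-up is a handful of lines per item.
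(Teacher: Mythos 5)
Your proof is correct and is exactly the routine argument the paper has in mind; the paper itself dismisses this proposition with the single word ``Straightforward.'' Nothing further is needed.
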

\begin{proof}
  Straightforward. 
\end{proof}
Next, we investigate the poset $P(n)$. As defined in \Cref{ssec:main}, the ground set of $P(n)$ consists of all $n$-dimensional vectors with entries in $\{1, -1\}$ 
and the order relation is the restriction of $\preceq$ to the ground set, denoted by $\preceq_P$.
In order to examine the poset $P(n)$, we define two operators. 
Let us denote the $i$th entry of a vector $\mb{v}$ by $v_i$ hereafter. 

\begin{definition}[Addition operator]
  For $k \in [n]$, the {\it addition operator} $\mathcal{A}^{(k)}$ takes the input $\mb{v} \in P(n)$ with $v_k=-1$ and sets the $k$th entry to $1$, 
  that is,
  \[
  \left(\mathcal{A}^{(k)}\mb{v}\right)_i =
  \begin{cases}
   v_k+2=1 & \text{ if }\; i=k\\
    v_i & \text{ if }\; i \neq k.
  \end{cases}
  \]
  For other elements $\mb{v}$, the output $\mathcal{A}^{(k)}\mb{v}$ is not defined.
\end{definition}


\begin{definition}[Swap operator]
  For $j$, $k \in [n]$ with $ j < k$, the {\it swap operator} $\mathcal{S}^{(j, k)}$ takes the input $\mb{v}\in P(n)$ with $v_{j}=-1$ and $v_k = 1$, and 
  swaps $v_j$ and $v_k$, that is,
  \[
  \left(\mathcal{S}^{(j,k)}\mb{v}\right)_i =
  \begin{cases}
    v_k=1 
    & \text{ if } \; i=j\\
    v_j=-1 
    & \text{ if } \; i=k\\
    v_i & \text{ if } \; i \notin \{j, k\}.
  \end{cases}
  \]
  For other elements $\mb{v}$, the output $\mathcal{S}^{(j,k)}\mb{v}$ is not defined.
\end{definition}

%

We can show some relationships between the operators and the partial order $\preceq_P$. 
\begin{proposition}\label{prop:<}
For $\mb{v}, \mb{w} \in P(n)$, the relation $\mb{v} \prec_P \mb{w}$ holds if and only if $\mb{w}$ is obtained from $\mb{v}$ by applying addition and/or swap operators.
\end{proposition}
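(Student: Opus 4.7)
The plan is to handle the two directions separately, with the forward (``if'') direction amounting to a direct computation on partial sums, and the backward (``only if'') direction by induction on the Hamming distance between $\mb{v}$ and $\mb{w}$.

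For the easy direction, I would compute the effect of each operator on cumulative sums. Applying $\mathcal{A}^{(k)}$ to $\mb{v}$ adds $2$ to every partial sum at position $\geq k$ and leaves the earlier ones unchanged, so the result strictly dominates $\mb{v}$ in $\preceq_P$. Applying $\mathcal{S}^{(j,k)}$ with $j<k$ adds $2$ to the partial sums at positions $j,j+1,\ldots,k-1$ and leaves all others (including those at positions $\geq k$) unchanged, so again the result strictly dominates $\mb{v}$. Composing operators therefore yields a strictly larger element.

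For the harder direction, assume $\mb{v} \prec_P \mb{w}$ and let $h(\mb{v},\mb{w})$ denote the number of coordinates on which they disagree. I would argue by induction on $h$ that one can always perform a single addition or swap operator on $\mb{v}$ to produce $\mb{v}'$ with $\mb{v} \prec_P \mb{v}' \preceq_P \mb{w}$ and $h(\mb{v}',\mb{w}) < h(\mb{v},\mb{w})$. Let $j$ be the least index at which $\mb{v}$ and $\mb{w}$ differ; comparing partial sums at $j$ and $j-1$ forces $v_j = -1$ and $w_j = 1$. Split into two cases depending on whether there is some index $i>j$ with $v_i = 1$ and $w_i = -1$. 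In the negative case, define $\mb{v}' := \mathcal{A}^{(j)}\mb{v}$; in the positive case, let $k$ be the \emph{smallest} such index and define $\mb{v}' := \mathcal{S}^{(j,k)}\mb{v}$. In both cases $h(\mb{v}',\mb{w})$ drops (by $1$ for addition, by $2$ for the swap), so the induction terminates at $\mb{v}' = \mb{w}$.

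The main obstacle will be verifying $\mb{v}' \preceq_P \mb{w}$, which is essentially a case analysis on the position $i$ of the partial sum. For $i<j$ the two partial sums of $\mb{v}'$ and $\mb{w}$ coincide with those of $\mb{v}$ and $\mb{w}$. For $i=j$, the partial sum of $\mb{v}$ is by definition exactly $2$ less than that of $\mb{w}$, which is precisely the jump caused by either operator. For positions strictly between $j$ and $k$ (or all positions beyond $j$ in the addition case), the key point is the minimality of $k$: it guarantees $w_\ell \geq v_\ell$ for every $\ell$ in that range, so the accumulated gap $\sum_{\ell=j+1}^{i}(w_\ell-v_\ell)$ is non-negative and absorbs the $+2$ produced by the operator. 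For $i\geq k$ in the swap case, the partial sums of $\mb{v}'$ revert to those of $\mb{v}$ and the inequality is inherited from $\mb{v}\preceq_P \mb{w}$. Once this verification is carried out, the induction closes and the two directions together give the claimed characterization.
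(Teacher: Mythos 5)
Your argument is correct, but it takes a genuinely different route from the paper. The paper proves the ``only if'' direction in one shot: it forms the difference vector $\mb{x}=\mb{w}-\mb{v}$, whose entries are $0$ or $\pm 2$, and uses the nonnegativity of the cumulative sums of $\mb{x}$ to pair each $-2$ entry with a distinct $+2$ entry to its left; each pair yields a swap operator and each unpaired $+2$ yields an addition operator, so the entire operator sequence is exhibited simultaneously. You instead proceed by induction on the Hamming distance, greedily applying a single operator at the leftmost disagreement and verifying that the intermediate element $\mb{v}'$ still satisfies $\mb{v}'\preceq_P\mb{w}$. The trade-off: the paper's decomposition is shorter on the surface but leaves the existence of the left-pairing as an assertion (it is a Hall-type matching claim that does follow from the cumulative-sum condition, but is not spelled out), whereas your version replaces that global claim with a local verification that you carry out explicitly and correctly — the key points being that $v_j=-1$, $w_j=1$ at the first disagreement and that $w_\ell\geq v_\ell$ on the range affected by the operator (by the Case~A hypothesis for $\mathcal{A}^{(j)}$, by minimality of $k$ for $\mathcal{S}^{(j,k)}$), so the accumulated gap absorbs the $+2$. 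Your construction also produces an explicit chain from $\mb{v}$ to $\mb{w}$ in $P(n)$, which is closer in spirit to the covering-relation analysis the paper performs separately in Proposition~\ref{prop:cover}. One small wording slip: in the addition case the bound $w_\ell\geq v_\ell$ for $\ell>j$ comes from the assumption that no index $i>j$ has $v_i=1$ and $w_i=-1$, not from ``minimality of $k$'' (there is no $k$ in that case); the underlying reasoning is nonetheless sound.
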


\begin{proof}
Assume that $\mb{w}$ is obtained from $\mb{v}$ by applying addition and/or swap operators.
For each $\mb{u} \in P(n)$ and any possible integers $j$ and $k$, we have $\mb{u} \prec_P \mathcal{A}^{(k)}\mb{u}$ and $\mb{u} \prec_P \mathcal{S}^{(j,k)}\mb{u}$. Therefore, $\mb{v} \prec_P \mb{w}$.

Conversely, assume that $\mb{v} \prec_P \mb{w}$. 
Let $\mb{x} = \mb{w} - \mb{v}$. Then $x_i$ is $\pm 2$ or $0$ for $i \in [n]$. 
Denote the indices of $x_i$'s equal to $-2$ by $k_1, \ldots, k_l$ in increasing order. 
Since any entry of the cumulative sum of $\mb{x}$ is nonnegative, we can pair each entry equal to $-2$ with an entry equal to 2 on the left, that is, we can choose distinct indices $j_1, \ldots, j_l$ of $x_i$'s equal to $2$ such that $j_1 < k_1, \ldots, j_l < k_l$.
Hence, $\mb{x}$ is decomposed as follows:

\begin{multline*}
\mb{x}
= (\underbrace{0, \ldots, 0}_{j_1-1}, 2, \underbrace{0, \ldots, 0}_{k_1-j_1-1}, -2, 0, \ldots, 0)
   + \dots + (\underbrace{0, \ldots, 0}_{j_l-1}, 2, \underbrace{0, \ldots, 0}_{k_l-j_l-1}, -2, 0, \ldots, 0)\\
   + (\underbrace{0, \ldots, 0}_{i_1-1}, 2, 0, \ldots, 0)
   + \dots + (\underbrace{0, \ldots, 0}_{i_m-1}, 2, 0, \ldots, 0),
\end{multline*}
  where $i_1, \dots, i_m$ are the indices of the remaining $2$'s. 
  Since the indices are all distinct, we have 
  $
  \mb{w}=\mathcal{A}^{(i_m)}\cdots \mathcal{A}^{(i_1)}\mathcal{S}^{(j_{l}, k_{l})}\cdots \mathcal{S}^{(j_1, k_1)}\mb{v}.
  $
This means that $\mb{w}$ is obtained from $\mb{v}$ by applying addition and/or swap operators.
\end{proof}

\begin{proposition}\label{prop:cover}
For $\mb{v}, \mb{w} \in P(n)$, the element $\mb{w}$ covers $\mb{v}$ if and only if $\mb{w}$ is obtained from $\mb{v}$ by applying exactly one of the following operators: $\mathcal{A}^{(n)}$ or one of  $\mathcal{S}^{(1,2)}, \mathcal{S}^{(2,3)}, \ldots, \mathcal{S}^{(n-1,n)}$. 
\end{proposition}

\begin{proof}
Assume that $\mb{w}$ covers $\mb{v}$, that is, $\mb{v} \prec_P \mb{w}$ and there is no $\mb{u}\in P(n)$ such that $\mb{v} \prec_P \mb{u} \prec_P \mb{w}$. Suppose, contrary to our claim, that $\mb{w}$ is not obtained from $\mb{v}$ by one of the operators.
We have the following three cases:

\noindent \textbf{Case 1:} $\mb{w}$ is obtained from $\mb{v}$ by two or more addition and/or swap operators.  Then there exists $\mb{u}\in P$ such that $\mb{v} \prec_P \mb{u} \prec_P \mb{w}$ since one of the operators always makes the input larger. This is a contradiction.

\noindent \textbf{Case 2:} $\mb{w}= \mathcal{A}^{(i)}\mb{v}$ for some $i\in [n-1]$. Then $v_i=-1$. 
If $v_{i+1}= 1$ then $\mb{w} =\mathcal{A}^{(i+1)} \mathcal{S}^{(i, i+1)} \mb{v}$.
If $v_{i+1}=-1$ then $\mb{w} = \mathcal{S}^{(i, i+1)} \mathcal{A}^{(i+1)} \mb{v}$.
Both lead to Case 1, a contradiction.

\noindent \textbf{Case 3:} $\mb{w} = \mathcal{S}^{(j,k)}\mb{v}$ for some $j$ and $k$ with $k-j\geq 2$. Then $v_j=-1$ and $v_k=1$. If $v_{j+1}=1$ then $\mb{w} = \mathcal{S}^{(j+1, k)} \mathcal{S}^{(j, j+1)} \mb{v}$.
If $v_{j+1}=-1$ then $\mb{w} = \mathcal{S}^{(j, j+1)} \mathcal{S}^{(j+1, k)}\mb{v}$.
Both lead to Case 1, a contradiction.

Conversely, assume that $\mb{w}$ is obtained from $\mb{v}$ by applying one of the operators. 
It is clear that $\mb{v} \prec_P \mb{w}$. 
Suppose, contrary to our claim, that there exists $\mb{u} \in P(n)$ such that $\mb{v} \prec_P \mb{u} \prec_P \mb{w}$. If $\mb{w} = \mathcal{A}^{(n)}\mb{v}$, then $v_i = w_i$ for $i \in[n-1]$, $v_n =-1$ and $w_n =1$. Thus, $u_i = v_i = w_i$ for $i \in[n-1]$ and $u_n \in \{1, -1\}$, so $\mb{u}$ equals $\mb{v}$ or $\mb{w}$, which is a contradiction.
If $\mb{w} = \mathcal{S}^{(k,k+1)}\mb{v}$ for some $k \in [n-1]$, then $v_i = w_i$ for $i \notin \{k, k+1\}$, $v_k =-1$, $v_{k+1} =1$, $w_k =1$ and $w_{k+1} =-1$. Thus, $u_i = v_i =w_i$ for $i \in [k-1]$, 
$u_k + u_{k+1}=0$, and $u_i = v_i =w_i$ for $i \in [n]\setminus [k+1]$, 
which is also a contradiction. 
\end{proof}

\begin{proposition}\label{prop:iso}
The poset $P(n)$ is order-isomorphic to $M(n)$.
\end{proposition}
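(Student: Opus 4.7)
The plan is to verify that the mapping $f \colon P(n) \to M(n)$ is a bijection and that $\mb{v} \preceq_P \mb{w}$ if and only if $f(\mb{v}) \pl f(\mb{w})$. Bijectivity of $f$ is immediate: the rule $i \in f(\mb{v}) \Leftrightarrow v_{n+1-i}=1$ is a one-to-one correspondence between $\{1,-1\}^n$ and $2^{[n]}$, both of size $2^n$.

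For the order-preservation, I would exploit the cumulative-sum definition of $\preceq_P$ directly. Fix $\mb{v} \in P(n)$ with $f(\mb{v}) = A$; since $v_i = 1$ is equivalent to $n+1-i \in A$, the number of $1$-entries among $v_1, \ldots, v_m$ equals $\lvert A \cap \{n+1-m, \ldots, n\} \rvert$, so the $m$th partial sum equals $2\,\lvert A \cap \{n+1-m,\ldots,n\}\rvert - m$. Writing $B = f(\mb{w})$ and substituting $t = n+1-m$, the condition $\mb{v} \preceq_P \mb{w}$ is therefore equivalent to
\[
\lvert A \cap \{t,t+1,\ldots,n\} \rvert \leq \lvert B \cap \{t,t+1,\ldots,n\} \rvert \quad \text{for every } t \in [n].
\]

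It then remains to show this upper-tail count condition is equivalent to the $M(n)$ ordering, namely $j \leq k$ together with $a_i \leq b_i$ for $i \in [j]$. For the forward direction, taking $t = 1$ gives $|A| \leq |B|$, and taking $t = a_i$ for each $i \in [j]$ yields $\lvert B \cap \{a_i,\ldots,n\}\rvert \geq \lvert A \cap \{a_i,\ldots,n\}\rvert = i$, so the $i$th largest element $b_i$ of $B$ satisfies $b_i \geq a_i$. For the converse, given $j \leq k$ and $a_i \leq b_i$ for all $i \in [j]$, fix any $t \in [n]$ and set $r = \lvert A \cap \{t,\ldots,n\}\rvert$; if $r = 0$ the bound is trivial, and otherwise $a_r \geq t$, so $b_1 > b_2 > \cdots > b_r \geq a_r \geq t$, yielding $\lvert B \cap \{t,\ldots,n\}\rvert \geq r$.

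The only delicate step is the bookkeeping that translates cumulative sums of $\pm 1$-vectors into upper-tail counts of the corresponding subsets; once this dictionary is established, the two set-theoretic conditions compare almost mechanically, so I anticipate no substantial obstacle beyond the indexing.
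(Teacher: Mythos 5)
Your proof is correct, but it takes a genuinely different route from the paper's. The paper derives the isomorphism from Proposition \ref{prop:<}: it first characterizes $\mb{v} \prec_P \mb{w}$ as ``$\mb{w}$ is obtained from $\mb{v}$ by addition and/or swap operators,'' and then argues that each operator either adds a $1$ or shifts a $1$ leftward, so that the left-most $1$'s of $\mb{w}$ dominate those of $\mb{v}$, which is exactly the $M(n)$ comparison after reversing indices; the converse direction is handled by appealing again to ``the properties of the operators.'' You instead bypass the operator machinery entirely and work straight from the cumulative-sum definition of $\preceq_P$, converting the $m$th partial sum into the upper-tail count $2\lvert A \cap \{n+1-m,\ldots,n\}\rvert - m$ and then proving the purely combinatorial equivalence between tail-count domination and the termwise comparison $j \leq k$, $a_i \leq b_i$. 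Both directions of that equivalence are carried out correctly (the choice $t = a_i$ for the forward direction and the observation $b_1 > \cdots > b_r \geq a_r \geq t$ for the converse are exactly the right moves, and the edge case $r=0$ is handled). What your approach buys is self-containment and rigor: it does not depend on Proposition \ref{prop:<}, and it replaces the paper's somewhat informal ``do not lie more to the right'' argument with an explicit dictionary between partial sums and tail counts. What the paper's approach buys is reuse: the operator characterization is needed anyway for Propositions \ref{prop:cover} and \ref{prop:maximal}, so routing the isomorphism through it keeps the toolkit uniform.
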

\begin{proof}
We need to show that $\mb{v} \preceq_P \mb{w}$ if and only if $f(\mb{v}) \pl f(\mb{w})$ for each $\mb{v}$ and $\mb{w}$ in $P(n)$.

Assume that $\mb{v} \preceq_P \mb{w}$. 
From \Cref{prop:<} it follows that $\mb{w}$ is obtained from $\mb{v}$ by applying addition and/or swap operators. 
An addition operator increases the number of $1$'s of the input by one and a swap operator shifts one of the $1$'s of the input to the left.
Now let $\mb{v}$ have $m$ $1$'s.
Then $\mb{w}$ has at least $m$ $1$'s, which implies that $f(\mb{w})$ has at least as many elements as $f(\mb{v})$. 
The $m$ left-most $1$'s of $\mb{w}$ do not lie more to the right than the $1$'s of $\mb{v}$, which implies that in decreasing order the $i$th element of $f(\mb{w})$ is not less than that of $f(\mb{v})$ for $i \in [m]$. Hence, $f(\mb{v}) \pl f(\mb{w})$.

Conversely, assume that $f(\mb{v}) \pl f(\mb{w})$. Let $f(\mb{v})=\{a_1, \ldots, a_j\}$ and $f(\mb{w}) = \{b_1, \ldots, b_k\}$ in decreasing order. Then $j \leq k$, which means that $\mb{w}$ has at least as many $1$'s as $\mb{v}$,
and $a_i \leq b_i$ for $i \in [j]$, which means that the $j$ left-most $1$'s of $\mb{w}$ do not lie more to the right than the $1$'s of $\mb{v}$. 
By the properties of addition and swap operators, 
$\mb{w}$ is obtained from $\mb{v}$ by applying addition and/or swap operators, that is, $\mb{v} \preceq_P \mb{w}$. 
\end{proof}

The poset $P(n)$ has the following symmetry. 
\begin{proposition}\label{prop:sym}
For $\mb{v}, \mb{w} \in P(n)$, the relation $\mb{v} \prec_P \mb{w}$ holds if and only if $-\mb{w} \prec_P -\mb{v}$.
\end{proposition}
\begin{proof}
For each $\mb{v} \in P(n)$, it holds that $-\mb{v} \in P(n)$. The statement follows immediately from \Cref{prop:real}(3). 
\end{proof}

It follows from \Cref{prop:iso} and its proof and \Cref{prop:sym} that the poset $P(n)$ has the following properties:
\begin{itemize}
  \item The rank function with minimum rank 0 is 
  \begin{equation}\label{eq:rankf}
  \frac{1}{2}\left(\mb{v}\cdot \mb{c}_0 + \frac{n(n+1)}{2}\right)
  \end{equation}
  for $\mb{v} \in P(n)$, where $\mb{c}_0 = (n, n-1, \ldots, 2, 1)$. 
  \item The poset $P(n)$ is rank-symmetric, rank-unimodal, and Sperner. 
  \item Its width is $\nu \left([n], \left\lfloor n(n+1)/4 \right\rfloor \right)$. 
  \item Its height is $n(n+1)/2 + 1$. 
  \item The poset $P(n)$ has a certain symmetry: $\mb{v} \prec_P \mb{w} \Longleftrightarrow -\mb{w} \prec_P -\mb{v}$.
  \item The poset $P(n)$ is a distributive lattice.
  The greatest element and the least one are respectively $(1, \ldots, 1)$ and $(-1, \ldots, -1)$.
\end{itemize}

We investigate the other poset $Q(n)$. 
Hereafter, let $n \geq 3$. We show some properties of $Q(n)$.

Let $R_{+}(n) = \{ \mb{v} \in P(n) \colon \mb{0} \prec \mb{v}\}$ and 
$R_{-}(n) = \{ \mb{v} \in P(n) \colon \mb{v}\prec \mb{0} \}$.
We can see that the subposets $R_+(n)$ and $R_-(n)$ are sublattices of $P(n)$. 
The least element of $R_+(n)$ is $(1, -1, 1, -1, \ldots)$, the rank of which in $P(n)$ is
$n(n+2)/4$ if $n$ is even and 
$(n+1)^2/4$ if $n$ is odd. 
The elements of $R_+(n)$ are not in the middle rank of $P(n)$. This is true for $R_-(n)$ by the symmetry of $P(n)$. Therefore, we obtain the next proposition.

\begin{proposition}
The width of $Q(n)$ is the same as that of $P(n)$.
\end{proposition}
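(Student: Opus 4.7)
The plan is to sandwich the width of $Q(n)$ between two equal quantities. For the easy direction, since $Q(n) \subseteq P(n)$ and $\preceq_Q$ is the restriction of $\preceq_P$, every antichain of $Q(n)$ is automatically an antichain of $P(n)$; hence the width of $Q(n)$ is at most that of $P(n)$.

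For the matching lower bound I would exhibit, inside $Q(n)$, an antichain whose size equals the width of $P(n)$. The natural choice is the middle rank level of $P(n)$: because $P(n)$ is Sperner (as already recorded for $P(n)$), its width equals the size of a largest rank level, and this maximum is attained at the middle. It therefore suffices to show that the middle rank level of $P(n)$ already lies inside $Q(n)$, i.e., that no element of $R_+(n) \cup R_-(n)$ has the middle rank.

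To verify this, I would invoke the rank computation recalled immediately before the proposition: the least element of $R_+(n)$ is the alternating vector $(1,-1,1,-1,\ldots)$, whose rank in $P(n)$ is $n/4 + n(n+1)/4$ for $n$ even and $(n+1)/4 + n(n+1)/4$ for $n$ odd; in either case this strictly exceeds the middle rank $\lceil n(n+1)/4 \rceil$ for all $n \geq 3$. Since every other element of $R_+(n)$ lies above this least element in $P(n)$, all of $R_+(n)$ sits strictly above the middle. The order-reversing symmetry $\mb{v} \prec_P \mb{w} \Longleftrightarrow -\mb{w} \prec_P -\mb{v}$ from Proposition \ref{prop:sym} transfers the conclusion to $R_-(n)$, which therefore sits strictly below the middle. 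The middle level is consequently a subset of $Q(n)$ and furnishes an antichain of the required size.

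I do not anticipate a substantive obstacle: once the Sperner property of $P(n)$ and the explicit rank of $(1,-1,1,-1,\ldots)$ are at hand, the argument amounts to a short bookkeeping observation. The only mild subtlety is to handle the parity of $n$ carefully when comparing the rank of the alternating vector with $\lceil n(n+1)/4 \rceil$, but this is a routine case check.
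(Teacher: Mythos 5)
Your proof is correct and follows essentially the same route as the paper: the upper bound is immediate from $Q(n)\subseteq P(n)$, and the lower bound comes from observing that the least element $(1,-1,1,-1,\ldots)$ of $R_+(n)$ has rank strictly above the middle (with the symmetric statement for $R_-(n)$), so the middle rank level of the Sperner poset $P(n)$ is an antichain contained in $Q(n)$. This is precisely the argument the paper sketches in the paragraph preceding the proposition.
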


For the size of $Q(n)$, the next proposition holds.
\begin{proposition}\label{prop:Qsize}
The size of $Q(n)$ is $2^n - 2\binom{n}{\lfloor n/2 \rfloor}$.
\end{proposition}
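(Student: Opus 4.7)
The plan is to compute $|Q(n)|$ by inclusion--exclusion against $P(n)$ and then reduce the remaining count to a classical lattice-path problem. Write $R_+(n) = \{\mb{v} \in P(n) \colon \mb{0} \prec \mb{v}\}$ and $R_-(n) = \{\mb{v} \in P(n) \colon \mb{v} \prec \mb{0}\}$, as in the paper. Since no vector of $P(n)$ equals $\mb{0}$, the set removed from $P(n)$ in the definition of $Q(n)$ is exactly $R_+(n) \cup R_-(n)$. These two pieces are disjoint, because if some $\mb{v}$ lay in both then transitivity of $\prec$ would force $\mb{0} \prec \mb{0}$. Hence $|Q(n)| = 2^n - |R_+(n)| - |R_-(n)|$, and Proposition~\ref{prop:sym}, applied via the involution $\mb{v} \mapsto -\mb{v}$, gives a bijection between $R_+(n)$ and $R_-(n)$, so it suffices to prove $|R_+(n)| = \binom{n}{\lfloor n/2 \rfloor}$.

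For this I would adopt the random-walk picture pointed out in the remark preceding Figure~\ref{fig_path}: an element of $R_+(n)$ is precisely a $\pm 1$-walk of length $n$ starting at $0$ whose every partial sum is non-negative. The reflection principle gives, for each admissible end height $h \geq 0$ with $h \equiv n \pmod{2}$,
\[
\#\{\text{length-}n \text{ walks from } 0 \text{ to } h \text{ staying } \geq 0\} = \binom{n}{(n+h)/2} - \binom{n}{(n+h)/2 + 1},
\]
the subtraction removing walks that touch $-1$ by reflecting their initial segment across $-1$ until the first such visit. Summing this formula over admissible $h$ produces a telescoping sum whose only surviving term is $\binom{n}{\lceil n/2 \rceil} = \binom{n}{\lfloor n/2 \rfloor}$. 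Substituting back, $|Q(n)| = 2^n - 2\binom{n}{\lfloor n/2 \rfloor}$, as required.

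The main (and essentially only) obstacle is the lattice-path count in the last step. It is elementary but cleanest when split into the two parity cases $n = 2m$ and $n = 2m+1$, since the admissible set of end heights $h$ differs: in the even case one gets $\binom{2m}{m} - \binom{2m}{2m+1} = \binom{2m}{m}$, and in the odd case $\binom{2m+1}{m+1} - \binom{2m+1}{2m+2} = \binom{2m+1}{m+1}$, both of which agree with $\binom{n}{\lfloor n/2 \rfloor}$. The rest of the proof is routine bookkeeping built on the disjointness of $R_+(n)$, $R_-(n)$ and the $-\mb{v}$ symmetry.
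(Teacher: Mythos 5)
Your proof is correct, and while the overall skeleton matches the paper's (both reduce the problem to showing $\#R_+(n)=\binom{n}{\lfloor n/2\rfloor}$ via the disjoint decomposition $\#Q(n)=2^n-\#R_+(n)-\#R_-(n)$ and the symmetry $\mb{v}\mapsto-\mb{v}$, and both then interpret $R_+(n)$ as nonnegative $\pm1$-walks), the key counting step is handled genuinely differently. The paper sets up the recurrence $X_n=2X_{n-1}-C_{(n-1)/2}$ for odd $n$ and $X_n=2X_{n-1}$ for even $n$ (the Catalan correction accounting for paths sitting at height $0$ after an even number of steps) and then simply cites that this sequence is known to equal $\binom{n}{\lfloor n/2\rfloor}$. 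You instead compute the count in closed form by the reflection principle, obtaining $\binom{n}{(n+h)/2}-\binom{n}{(n+h)/2+1}$ for each admissible end height $h$ and telescoping the sum over $h$ down to $\binom{n}{\lceil n/2\rceil}=\binom{n}{\lfloor n/2\rfloor}$; I checked the reflection bijection (initial segment reflected across $-1$, landing on paths from $-2$ to $h$) and the telescoping in both parity cases, and they are right. Your route is self-contained where the paper's appeals to a known sequence, at the cost of invoking the reflection principle; the paper's recurrence is arguably more elementary but leaves the final identification unproved. Your explicit remarks that no element of $P(n)$ equals $\mb{0}$ and that $R_+(n)\cap R_-(n)=\emptyset$ by transitivity are details the paper glosses over, and they are worth keeping.
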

\begin{proof}
By definition, $\# Q(n) = \# P(n)- \# R_+(n)-\# R_-(n)$, where $\#$ indicates the size of a set. 
Since $\# P(n)=2^n$ and $\# R_+(n)= \# R_-(n)$, it is sufficient to show that $\# R_+(n)= \binom{n}{\lfloor n/2 \rfloor}$.
Let us regard the elements of $R_+(n)$ as lattice paths.
The size is the number of $n$-step paths from 0 never going below 0.
Let $X_n$ be this count, with $X_0 = 1$. Then we have
\[
X_n =
\begin{cases}
    2 X_{n-1}- C_{(n-1)/2} 
    & \text{ if $n$ is odd}\\
    2 X_{n-1} & \text{ if $n$ is even},
\end{cases}
\]
where $C_m$ is the $m$th Catalan number.
Indeed, a path of length $n-1$ can be extended by a step $+1$ or $-1$ unless it ends at height $0$, in which case only the step $+1$ is available; paths of length $n-1$ from $0$ to $0$ never going below $0$ exist only when $n-1$ is even, and their number is $C_{(n-1)/2}$.
Solving the recurrence (or by induction on $n$), we obtain $X_n = \binom{n}{\lfloor n/2 \rfloor}$, which is sequence A001405 in \cite{OEIS}. 
\end{proof}

The poset $Q(n)$ has a similar symmetry. 
\begin{proposition}\label{prop:symQ}
For $\mb{v}, \mb{w} \in Q(n)$, the relation $\mb{v} \prec_Q \mb{w}$ holds if and only if $-\mb{w} \prec_Q -\mb{v}$.
\end{proposition}
\begin{proof}
For each $\mb{v} \in Q(n)$, we show that $-\mb{v} \in Q(n)$.
Suppose, contrary to our claim, that $-\mb{v} \notin Q(n)$. 
Then $-\mb{v} \prec \mb{0}$ or $\mb{0} \prec -\mb{v}$, which contradicts $\mb{v} \in Q(n)$. 
The statement follows immediately from \Cref{prop:real}(3). 
\end{proof}

\begin{proposition}\label{prop:Qgraded}
The poset $Q(n)$ is graded and one of the rank functions is the same as the rank function (\ref{eq:rankf}) of $P(n)$.
\end{proposition}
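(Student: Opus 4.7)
The plan is to verify the two defining properties of a graded poset for $Q(n)$ using the rank function $\rho$ from equation (\ref{eq:rankf}), restricted to $Q(n)$. Property (i), strict monotonicity along $\prec_Q$, is immediate: since $\prec_Q$ is the restriction of $\prec_P$ to $Q(n)$ and $\rho$ is already a rank function on $P(n)$, any strict comparison $\mb{v} \prec_Q \mb{w}$ gives $\rho(\mb{v}) < \rho(\mb{w})$ automatically. So the only real content is property (ii): if $\mb{w}$ covers $\mb{v}$ in $Q(n)$, then $\rho(\mb{w}) = \rho(\mb{v}) + 1$.

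To establish (ii), the natural strategy is to show that every cover in $Q(n)$ is already a cover in $P(n)$; the rank difference $1$ then follows from the fact that $\rho$ is a rank function on $P(n)$. So I would assume $\mb{w}$ covers $\mb{v}$ in $Q(n)$ and then argue that no $\mb{u} \in P(n)$ can satisfy $\mb{v} \prec_P \mb{u} \prec_P \mb{w}$. The key observation is that $Q(n)$ is obtained from $P(n)$ by removing the ``upper cone'' $R_+(n)$ above $\mb{0}$ and the ``lower cone'' $R_-(n)$ below $\mb{0}$, so any element between two elements of $Q(n)$ must itself avoid both cones.

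Concretely, suppose for contradiction that such an intermediate $\mb{u}\in P(n)$ exists. If $\mb{u} \in R_-(n)$, i.e.\ $\mb{u} \prec \mb{0}$, then by transitivity $\mb{v} \prec_P \mb{u} \prec \mb{0}$ would force $\mb{v} \in R_-(n)$, contradicting $\mb{v} \in Q(n)$. Symmetrically, if $\mb{u} \in R_+(n)$, then $\mb{0} \prec \mb{u} \prec_P \mb{w}$ gives $\mb{w} \in R_+(n)$, contradicting $\mb{w} \in Q(n)$. Hence $\mb{u} \in Q(n)$, which contradicts that $\mb{w}$ covers $\mb{v}$ in $Q(n)$. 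Therefore $\mb{w}$ also covers $\mb{v}$ in $P(n)$, and $\rho(\mb{w}) - \rho(\mb{v}) = 1$ follows from the rank function on $P(n)$.

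I do not expect a serious obstacle here: the only subtlety is the inductive-style worry that removing elements from a graded poset could ``merge'' covering relations and destroy gradedness, and the $\mb{0}$-cone argument above is precisely what rules this out. The symmetry of $Q(n)$ expressed in Proposition \ref{prop:symQ} makes the two cone cases entirely parallel, so one case really suffices modulo an invocation of the symmetry.
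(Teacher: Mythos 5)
Your proof is correct and follows essentially the same route as the paper: both arguments reduce to the observation that $P(n)\setminus Q(n)$ is the union of the up-set $R_+(n)$ and the down-set $R_-(n)$, so by transitivity no element of $P(n)$ strictly between two elements of $Q(n)$ can fall outside $Q(n)$, hence covers in $Q(n)$ remain covers in $P(n)$. The paper merely phrases the contradiction in terms of a rank gap $\rho(\mb{w})\geq\rho(\mb{v})+2$ rather than directly in terms of covering relations, which is an immaterial difference.
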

\begin{proof}
    Denote the rank function of $P(n)$ by $\rho$. 
    We show that a rank function of $Q(n)$ is $\rho$.
    Since $Q(n)$ is a subposet of $P(n)$, if $\mb{v} \prec_Q \mb{w}$ then $\rho(\mb{v}) < \rho(\mb{w})$. 
    When $\mb{w}$ covers $\mb{v}$ in $Q(n)$, suppose, contrary to our claim, that $\rho(\mb{w}) \geq \rho(\mb{v})+2$. 
    Then there exists some $\mb{u} \in P(n) \setminus Q(n)$ such that $\mb{v} \prec_P \mb{u} \prec_P \mb{w}$ and $\rho(\mb{u})=\rho(\mb{v})+1$. Since $\mb{u}$ is not in $Q(n)$, $\mb{u} \prec \mb{0}$ or $\mb{0} \prec \mb{u}$. 
    If $\mb{u} \prec \mb{0}$ then $\mb{v} \prec \mb{0}$.
    If $\mb{0} \prec \mb{u}$ then $\mb{0} \prec \mb{w}$.
    Either case is a contradiction. 
\end{proof}

\begin{corollary}\label{cor:coverQ}
For $\mb{v}, \mb{w} \in Q(n)$, the element $\mb{w}$ covers $\mb{v}$ in $Q(n)$ if and only if $\mb{w}$ covers $\mb{v}$ in $P(n)$; that is, the cover relations of $Q(n)$ are exactly the cover relations of $P(n)$ whose two elements both lie in $Q(n)$.
\end{corollary}
\begin{proof}
If $\mb{w}$ covers $\mb{v}$ in $P(n)$, then $\mb{v} \prec_Q \mb{w}$ because both elements lie in $Q(n)$ by hypothesis, and no element of $Q(n) \subseteq P(n)$ lies strictly between them; hence $\mb{w}$ covers $\mb{v}$ in $Q(n)$. Conversely, if $\mb{w}$ covers $\mb{v}$ in $Q(n)$, then $\rho(\mb{w}) = \rho(\mb{v}) + 1$ by \Cref{prop:Qgraded}, so no element of $P(n)$ lies between $\mb{v}$ and $\mb{w}$, that is, $\mb{w}$ covers $\mb{v}$ in $P(n)$.
\end{proof}

\begin{proposition}\label{prop:maximal}
  The maximal (minimal) elements of $Q(n)$ are
  $\mb{m}_0$, $\mb{m}_1$, $\ldots, \mb{m}_\ell\:$ 
  $(-\mb{m}_0$, $-\mb{m}_1$, $\ldots, -\mb{m}_\ell$, resp.).
\end{proposition}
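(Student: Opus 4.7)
The plan is to prove the statement for maximal elements and deduce the minimal case from Proposition~\ref{prop:symQ}, since $\mb{v}$ is maximal in $Q(n)$ iff $-\mb{v}$ is minimal. The key preliminary computation is that the cumulative sums $S_i(\mb{u}) := u_1 + \cdots + u_i$ of $\mb{m}_k$ are
\[
S_i(\mb{m}_k) = \begin{cases} i & \text{if } 1 \leq i \leq k, \\ 2k - i & \text{if } k \leq i \leq 2k+1, \\ i - 2k - 2 & \text{if } 2k+1 \leq i \leq n. \end{cases}
\]
Hence $S_i(\mb{m}_k)$ is negative only at $i = 2k+1$ (value $-1$) and positive somewhere (at $i=1$ if $k \geq 1$; at $i=3$ if $k=0$, using $n \geq 3$). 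In particular each $\mb{m}_k \in Q(n)$.

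To show $\mb{m}_k$ is maximal, I suppose $\mb{v} \in Q(n)$ with $\mb{m}_k \preceq_Q \mb{v}$ and deduce $\mb{v} = \mb{m}_k$. Dominance gives $S_i(\mb{v}) \geq S_i(\mb{m}_k) \geq -1$, while $\mb{v} \in Q(n)$ demands some $S_i(\mb{v}) < 0$; combined with the parity of cumulative sums (same parity as $i$), this pins $S_{2k+1}(\mb{v}) = -1$. Then the bounds $i \leq S_i(\mb{v}) \leq i$ for $i \leq k$ force $v_i = 1$ on $\{1,\ldots,k\}$; the drop from $S_k(\mb{v}) = k$ to $S_{2k+1}(\mb{v}) = -1$ over $k+1$ steps of $\pm 1$ forces $v_i = -1$ on $\{k+1, \ldots, 2k+1\}$; and the bound $S_n(\mb{v}) \geq n - 2k - 2$ with $S_{2k+1}(\mb{v}) = -1$ forces $v_i = 1$ on $\{2k+2, \ldots, n\}$. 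Hence $\mb{v} = \mb{m}_k$.

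Conversely, every $\mb{v} \in Q(n)$ sits below some $\mb{m}_k$. Let $p$ be the smallest index with $S_p(\mb{v}) < 0$; then $v_p = -1$ and $S_{p-1}(\mb{v}) \geq 0$, and parity forces $S_{p-1}(\mb{v}) = 0$, $S_p(\mb{v}) = -1$, and $p = 2k+1$ odd with $k \in \{0, \ldots, \ell\}$ (since $p \leq n$). To verify $\mb{v} \preceq_Q \mb{m}_k$, I compare cumulative sums in three ranges: for $i \leq k$, trivially $S_i(\mb{v}) \leq i$; for $k \leq i \leq 2k+1$, $S_i(\mb{v}) = -1 - \sum_{j=i+1}^{2k+1} v_j \leq -1 + (2k+1 - i) = 2k - i$; for $i \geq 2k+1$, $S_i(\mb{v}) = -1 + \sum_{j=2k+2}^{i} v_j \leq -1 + (i - 2k - 1) = i - 2k - 2$. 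Each upper bound matches $S_i(\mb{m}_k)$. The main subtlety is keeping the two inequality directions straight: dominance $\mb{m}_k \preceq \mb{v}$ in the maximality argument yields lower bounds on $S_i(\mb{v})$ that must be squeezed against trivial upper bounds to force entries, while the converse $\mb{v} \preceq \mb{m}_k$ is verified by upper bounds on $S_i(\mb{v})$ derived from the pinned value $S_{2k+1}(\mb{v}) = -1$.
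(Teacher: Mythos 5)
Your proof is correct, and it takes a genuinely different route from the paper's. The paper works through its operator machinery: it shows each $\mb{m}_k$ is maximal by checking that the unique element covering $\mb{m}_k$ in $P(n)$ (identified via Proposition~\ref{prop:cover} as $\mathcal{S}^{(2k+1,2k+2)}\mb{m}_k$ or $\mathcal{A}^{(n)}\mb{m}_k$) lies in $R_+(n)$, and it shows completeness by a case analysis on whether a maximal $\mb{v}$ contains a $(-1,1)$ sequence, arguing that any applicable swap or addition must eject $\mb{v}$ from $Q(n)$. You instead argue entirely in terms of cumulative sums (the random-walk picture): the observation that $S_i(\mb{m}_k)\geq -1$ with the unique negative value at $i=2k+1$ lets you squeeze any $\mb{v}\succeq_Q\mb{m}_k$ down to $\mb{m}_k$ entry by entry, and your converse is constructive --- it exhibits, for every $\mb{v}\in Q(n)$, the specific maximal element above it, with $k$ read off from the first index at which the walk of $\mb{v}$ goes negative. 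What your approach buys is self-containment (no reliance on Propositions~\ref{prop:<} and~\ref{prop:cover}) and the extra structural information of an explicit $\mb{m}_k$ dominating each element; what the paper's approach buys is reuse of the cover characterization it needs elsewhere anyway. One presentational nit: in the maximality step, the cleanest justification that the negative cumulative sum of $\mb{v}$ must occur at $i=2k+1$ is that $S_i(\mb{v})\geq S_i(\mb{m}_k)\geq 0$ for all $i\neq 2k+1$, which is already contained in your preliminary computation; the appeal to parity is not actually needed (integrality plus $S_i(\mb{v})\geq -1$ suffices).
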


\begin{proof}
We first prove that $\mb{m}_k$ is maximal for $k \in \{0, 1, \dots, \ell\}$.
It is sufficient to show that the elements covering $\mb{m}_k$ on $P(n)$ are not in $Q(n)$.
If $n$ is even, then the last entry of $\mb{m}_k$ is always $1$ for each $k$. From this fact and \Cref{prop:cover} it follows that the unique element covering $\mb{m}_k$ is $\mathcal{S}^{(2k+1, 2k+2)} \mb{m}_k$, that is, 
\begin{equation*}
(\underbrace{1, \ldots, 1}_{k}, \underbrace{-1, \ldots, -1}_{k}, 1, -1, 1, \ldots, 1),
\end{equation*}
which is greater than $\mb{0}$.
If $n$ is odd, then the last entry of $\mb{m}_k$ might be $-1$. The unique element covering $\mb{m}_k$ is $\mathcal{S}^{(2k+1, 2k+2)} \mb{m}_k$ 
if $k < \ell$, and $\mathcal{A}^{(n)} \mb{m}_k$, that is,
\begin{equation*}
(\underbrace{1, \ldots, 1}_{k}, \underbrace{-1, \ldots, -1}_{k}, 1)
\end{equation*}
if $k = \ell$.
Both are greater than $\mb{0}$.

We next prove that no other elements are maximal. Consider a maximal element $\mb{v} \in Q(n)$. 
Now we distinguish two cases.

\noindent \textbf{Case 1:} $\mb{v}$ contains no $(-1, 1)$ sequence. Then $\mb{v}$ is a vector with $1$'s up to some position (possibly none) and then $-1$'s for the rest of the entries, that is, $(1, \ldots, 1)$, $(-1, \ldots, -1)$ or $(1,\ldots, 1, -1,\ldots, -1)$. 
The first two are obviously not in $Q(n)$.
The element $\mathcal{A}^{(n)}\mb{v}$ covers $\mb{v}$ in $P(n)$, so it is not in $Q(n)$ by the maximality of $\mb{v}$; moreover, $\mathcal{A}^{(n)}\mb{v} \prec \mb{0}$ would imply $\mb{v} \prec \mb{0}$, contradicting $\mb{v} \in Q(n)$. Hence $\mathcal{A}^{(n)}\mb{v} \in R_{+}(n)$, that is, all cumulative sums of $\mathcal{A}^{(n)}\mb{v}$ are nonnegative. Since the operator $\mathcal{A}^{(n)}$ changes only the last cumulative sum, raising it by $2$, the other cumulative sums of $\mb{v}$ coincide with those of $\mathcal{A}^{(n)}\mb{v}$ and are nonnegative, so the only possibly negative cumulative sum of $\mb{v}$ is the last one; as $v_1 + \cdots + v_{n-1} \geq 0$ and $v_n = -1$,
the sum $v_1 + \cdots + v_{n}$ must be equal to $-1$.
Therefore, $\mb{v}$ is 
\begin{equation*}
(\underbrace{1, \ldots, 1}_{(n-1)/2}, \underbrace{-1, \ldots, -1}_{(n-1)/2},-1),
\end{equation*}
where $n$ is odd. 
This is exactly $\mb{m}_{\ell}$.

\noindent \textbf{Case 2:} $\mb{v}$ contains $(-1, 1)$ sequences. 
We assume that the left-most sequence lies at the $j$th and $(j+1)$th positions. Then $\mathcal{S}^{(j,j+1)}\mb{v} \in R_+(n)$: this element covers $\mb{v}$ in $P(n)$ (\Cref{prop:cover}), so it is not in $Q(n)$ by the maximality of $\mb{v}$, and $\mathcal{S}^{(j,j+1)}\mb{v} \prec \mb{0}$ would imply $\mb{v} \prec \mb{0}$.
The swap raises the $j$th cumulative sum by $2$ and leaves the others unchanged, which therefore coincide with those of $\mathcal{S}^{(j,j+1)}\mb{v} \in R_+(n)$ and are nonnegative, so the only possibly negative cumulative sum of $\mb{v}$ is the $j$th one; as $v_1 + \cdots + v_{j-1} \geq 0$ and $v_j = -1$,
the sum $v_1 + \cdots + v_j$ must be equal to $-1$.
If $\mb{v}$ contains another $(-1, 1)$ sequence at the $j'$th and $(j'+1)$th positions, then $\mathcal{S}^{(j',j'+1)}\mb{v} \in Q(n)$, which contradicts the maximality. 
Therefore, the subsequences $(v_1, \ldots, v_{j-1})$ and $(v_{j+2}, \ldots, v_n)$ contain no $(-1, 1)$ sequence. 
By a similar argument to Case 1, they are $(1, \ldots, 1)$, $(-1, \ldots, -1)$, $(1,\ldots, 1, -1,\ldots, -1)$ or empty. 
Since $v_1 + \cdots + v_{j-1}=0$, we get 
\begin{equation*}
(v_1, \ldots, v_{j-1})= (\underbrace{1, \ldots, 1}_{(j-1)/2}, \underbrace{-1, \ldots, -1}_{(j-1)/2}),
\end{equation*}
where $j$ is odd (if $j=1$ then the subsequence is empty). 
On the other hand, if $v_n =-1$ then $\mathcal{A}^{(n)}\mb{v} \in Q(n)$, which contradicts the maximality. Hence, $(v_{j+2}, \ldots, v_n)$ is $(1, \dots, 1)$ or empty. 
Consequently, $\mb{v}$ is
\begin{equation*}
(\underbrace{1, \ldots, 1}_{(j-1)/2}, \underbrace{-1, \ldots, -1}_{(j-1)/2}, -1, 1, \underbrace{1,\ldots, 1}_{n-(j+1)}),
\end{equation*}
which is exactly $\mb{m}_{k}$, where $k=(j-1)/2$ and $j \leq n-1$.
Note that $k \in \{0, 1, \dots, \ell\}$ if $n$ is even and $k \in \{0, \dots, \ell-1\}$ if $n$ is odd. 

By symmetry, if $\mb{v}$ is a maximal element, then $-\mb{v}$ is a minimal element.  
\end{proof}



%


\begin{proposition}\label{prop:Qheight}
 The height of $Q(n)$ is  
  \begin{align*}
    &\frac{n(n-1)}{2}-\left(n-\frac{3}{2}\ell\right)(\ell+1)+1 &&\text{ for }\: n \leq 7, \text{ and}&&&\\
    &\frac{(n-2)(n-3)}{2}+1 &&\text{ for }\: n\geq 8. 
 \end{align*}
 \end{proposition}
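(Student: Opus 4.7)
The plan is to exploit the fact that $Q(n)$ is graded with rank function $\rho$ inherited from $P(n)$: any chain $C$ in $Q(n)$ then has length at most $\rho(\max C) - \rho(\min C) + 1$, so the height equals the maximum of $\rho(\mb{w}) - \rho(\mb{v}) + 1$ over comparable pairs $\mb{v} \preceq_Q \mb{w}$ that can be spanned by a saturated chain in $Q(n)$.

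For the upper bound, I would extend any chain in $Q(n)$ to one whose bottom is a minimal element and whose top is a maximal element. By Proposition \ref{prop:maximal} these have the form $-\mb{m}_k$ and $\mb{m}_{k'}$, and by Lemma \ref{lem_chain} they are comparable only when $k \neq k'$ (for $n \geq 4$); hence the height is bounded by $\max_{k \neq k'}\bigl(\rho(\mb{m}_{k'}) - \rho(-\mb{m}_k) + 1\bigr)$. The identity $\rho(-\mb{m}_k) = \tfrac{1}{2}n(n+1) - \rho(\mb{m}_k)$ (from the rank-symmetry of $P(n)$) makes this expression symmetric in $k$ and $k'$, so I would reduce the problem to finding the two largest values of $\rho(\mb{m}_k) = \tfrac{1}{2}(n^2 - n - 2nk + 3k^2 + 3k)$ on $k \in \{0, 1, \ldots, \ell\}$. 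Being a convex parabola in $k$, this function attains its extrema on the index set at the endpoints $k = 0$ or $k = \ell$; a short computation gives $\rho(\mb{m}_0) - \rho(\mb{m}_\ell) = n(n-2)/8$ ($n$ even) or $(n-1)(n-3)/8$ ($n$ odd), both positive for $n \geq 4$, so $\rho(\mb{m}_0)$ is the unique largest value. Comparing $\rho(\mb{m}_1)$ with $\rho(\mb{m}_\ell)$ yields $(n-4)(n-6)/8$ ($n$ even) or $(n-3)(n-9)/8$ ($n$ odd), which is $\leq 0$ precisely for $4 \leq n \leq 7$ and $\geq 0$ for $n \geq 8$; the second-largest value is therefore $\rho(\mb{m}_\ell)$ for $n \leq 7$ and $\rho(\mb{m}_1)$ for $n \geq 8$. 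Substituting into $\rho(\mb{m}_0) + \rho(\mb{m}_s) - \tfrac{1}{2}n(n+1) + 1$ with $s = \ell$ or $s = 1$ and simplifying recovers the two claimed closed forms.

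For the matching lower bound, I would exhibit an explicit chain attaining this value. Since $P(n)$ is a distributive lattice, between any comparable pair there is a saturated chain whose length equals the rank difference plus one, so it suffices to verify that the interval $[-\mb{m}_s, \mb{m}_0]$ in $P(n)$ lies entirely inside $Q(n)$, where $s = 1$ for $n \geq 8$ and $s = \ell$ for $4 \leq n \leq 7$. For any $\mb{u}$ in this interval, the relation $\mb{u} \preceq \mb{m}_0$ forces $u_1 = -1$ and hence $\mb{0} \not\preceq \mb{u}$; meanwhile, $\mb{u} \succeq -\mb{m}_s$ combined with the fact that the cumulative sum of $-\mb{m}_s$ attains the value $+1$ at some position (position $2s+1$ for $s = 1$; position $n$ or $n-1$ for $s = \ell$, depending on parity) forces the cumulative sum of $\mb{u}$ to reach $\geq 1$ there, so $\mb{u} \not\preceq \mb{0}$. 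Hence $\mb{u} \in Q(n)$, and the distributive-lattice saturated chain lives inside $Q(n)$. The degenerate case $n = 3$ is handled by direct inspection: $Q(3)$ is the antichain $\{\mb{m}_0, \mb{m}_1\}$, so its height is $1$, matching the formula.

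The main obstacle I expect is the algebraic comparison in the second step: identifying the second-largest value of $\rho(\mb{m}_k)$ requires a parity-sensitive inequality and is precisely the source of the transition at $n = 8$. The remaining simplification from $\rho(\mb{m}_0) + \rho(\mb{m}_s) - \tfrac{1}{2}n(n+1) + 1$ to the stated closed forms is routine but bookkeeping-heavy.
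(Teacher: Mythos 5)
Your proof is correct and follows essentially the same route as the paper's: both reduce the height to $\max_{k\neq k'}\{\rho(\mb{m}_{k'})-\rho(-\mb{m}_k)+1\}$ via Lemma \ref{lem_chain} and the gradedness of $Q(n)$, and your convexity/second-largest-value analysis of $\rho(\mb{m}_k)$ is just a one-dimensional rephrasing of the paper's ``farthest point from the center $((2n-3)/6,(2n-3)/6)$'' argument, yielding the same candidates $(1,0)$ and $(\ell,0)$ and the same threshold at $n=8$. The only substantive difference is your separate lower-bound step verifying that the interval $[-\mb{m}_s,\mb{m}_0]$ lies in $Q(n)$; this is sound but redundant, since the gradedness of $Q(n)$ already guarantees that any maximal chain of $Q(n)$ containing a comparable pair $-\mb{m}_k \prec_Q \mb{m}_{k'}$ has size exactly equal to the rank difference plus one.
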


We first show the next lemma.
\begin{lemma}\label{lem:chain}
 In the poset $Q(n)$ with $n \geq 4$, a minimal element $-\mb{m}_k$ is less than a maximal element $\mb{m}_{k'}$ with $k' \neq k$, that is, $-\mb{m}_k \prec_Q \mb{m}_{k'}$.
\end{lemma}
\begin{proof}
Let $k_0 = \min\{k, k'\}$ and $k_1 =\max\{k, k'\}$. 
If $k_0 < k_1 \leq 2k_0 + 1$, then 
\begin{equation*}
\mb{m}_{k'}-(-\mb{m}_k) = (\underbrace{2, \ldots, 2}_{k_0}, 
\underbrace{0, \ldots, 0}_{k_1 - k_0}, 
\underbrace{-2, \ldots, -2}_{2k_0 + 1 - k_1},
\underbrace{0, \ldots, 0}_{2(k_1 - k_0)}, 2, \ldots, 2),
\end{equation*}
which is greater than $\mb{0}$ since $k_0 \geq 2k_0 + 1 - k_1$.
If $2k_0 + 1 < k_1$, then
\begin{equation*}
\mb{m}_{k'}-(-\mb{m}_k) = (\underbrace{2, \ldots, 2}_{k_0}, 
\underbrace{0, \ldots, 0}_{k_0+1}, 
\underbrace{2, \ldots, 2}_{k_1 - (2k_0 + 1)},
\underbrace{0, \ldots, 0}_{k_1 + 1}, 2, \ldots, 2),
\end{equation*}
which is evidently greater than $\mb{0}$. 
\end{proof}
Note that in the case where $n=3$, the maximal elements $\mb{m}_0=(-1,1,1)$ and $\mb{m}_1=(1,-1,-1)$ are also minimal elements. 

\begin{proof}[Proof of \Cref{prop:Qheight}]
We first consider the case where $n\geq 4$. 
    From \Cref{lem:chain} it follows that two elements $-\mb{m}_k$ and $\mb{m}_{k'}$ with $k' \neq k$ are comparable. 
    Let us take a maximal chain\footnote{A chain is a maximal chain if no other element can be added to it without losing the property of being totally ordered.} containing the two elements. 
    Since $Q(n)$ is graded, the size of the maximal chain is $\rho(\mb{m}_{k'}) -\rho(-\mb{m}_{k}) + 1$, where $\rho$ denotes the rank function (\ref{eq:rankf}) of $P(n)$ and $Q(n)$.
    Therefore, the height is
    \begin{equation}\label{eq_height}
    \max_{k \neq k'} \left\{\rho(\mb{m}_{k'}) -\rho(-\mb{m}_{k}) + 1\right\}.
    \end{equation}
    Since $\rho(\mb{v}) =\left(\mb{v} \cdot \mb{c}_0 + n(n+1)/2 \right)/2$ for $\mb{v} \in Q(n)$, we have
    \begin{align*}
    \rho(\mb{m}_{k'}) &= \frac{1}{2}n(n+1)-n(k'+1)+\frac{3}{2}k'(k'+1),\\
    \rho(-\mb{m}_{k}) &= n(k+1)-\frac{3}{2}k(k+1).
    \end{align*}
    Substituting these expressions into (\ref{eq_height}), we obtain
    \begin{equation*}\label{eq_height'}
    \max_{k \neq k'} \left\{\frac{3}{2}\left(k - \frac{2n-3}{6}\right)^2 
    + \frac{3}{2}\left(k' - \frac{2n-3}{6}\right)^2
    - 3 \left(\frac{2n-3}{6}\right)^2 + \frac{n(n-3)}{2} + 1\right\}.
    \end{equation*}
The objective function is symmetric in $k$ and $k'$, and we can assume that $k > k'$. 
In the $k$-$k'$ plane the farthest point from the center point $\left((2n-3)/6, (2n-3)/6\right)$ provides the maximum value. 
Therefore, the candidate points are $(1, 0)$ and $(\ell, 0)$. 
Determining the position of the center point relative to the line $k=(1+\ell)/2$, equidistant from the two points, we conclude that the optimal point is $(\ell, 0)$ for $n \leq 7$ and $(1,0)$ for $n \geq 8$. 
Hence, the height is $n(n-1)/2-(n-3\ell/2)(\ell+1)+1$ for $n \leq 7$ and $(n-2)(n-3)/2+1$ for $n\geq 8$.

In the case where $n=3$, the height is $1$, which is obtained by the expression above for $n\leq 7$.   
\end{proof}


Next, we establish a property of the minimal elements, which will play a crucial role in the next section.
\begin{proposition}\label{lem:compara}
For any $\mb{v} \in Q(n)$ and any minimal element of $Q(n)$,  
the minimal element is comparable with either $\mb{v}$ or $-\mb{v}$;  
more precisely, either $-\mb{m}_k \preceq \mb{v}$ or $-\mb{m}_k \preceq -\mb{v}$ holds for each $k \in \{0, 1, \dots, \ell\}$.
\end{proposition}

\begin{proof}
Fix $k$ and let $U = \{\mb{x} \in P(n) \colon -\mb{m}_k \preceq \mb{x}\}$. Now $-\mb{m}_k$ is
  \begin{equation*}
  (\underbrace{-1, \ldots, -1}_{k}, \underbrace{1, \ldots, 1}_{k+1}, \underbrace{-1, \ldots, -1}_{n-2k-1}).
\end{equation*}
We write $\mb{x} \in P(n)$ as $(\mb{x}_1, \mb{x}_2)$ with $\mb{x}_1 \in P(2k+1)$ and $\mb{x}_2 \in P(n-2k-1)$, and claim that
\[
U = \{ (\mb{x}_1, \mb{x}_2) \in P(n) \colon \mb{x}_1 \text{ contains at most } k \text{ entries equal to } -1 \}.
\]
If $\mb{x} \in U$, then the $(2k+1)$th cumulative sum of $\mb{x}$ is at least that of $-\mb{m}_k$, which is $1$; hence $\mb{x}_1$ contains at most $k$ entries equal to $-1$.
Conversely, let $\mb{x}_1$ contain exactly $m$ entries equal to $-1$, where $m \leq k$. Then $\mb{x}_1$ is greater than or equal to
\begin{equation*}
(\underbrace{-1, \ldots, -1}_{k}, \underbrace{1, \ldots, 1}_{k+1})\in P(2k+1)
\end{equation*}
in $P(2k+1)$: the $i$th cumulative sum of the latter is $\max\{-i,\, i-2k\}$, while that of $\mb{x}_1$ is at least $\max\{-i,\, i-2m\} \geq \max\{-i,\, i-2k\}$.
Moreover, any $\mb{x}_2 \in P(n-2k-1)$ is greater than or equal to the least element $(-1, \ldots, -1)$ of $P(n-2k-1)$, and the $(2k+1)$th cumulative sum of $\mb{x}$ is $2k+1-2m$, which is at least $1$. Combining these facts entrywise in the cumulative sums yields $-\mb{m}_k \preceq \mb{x}$, which proves the claim. Consequently,
\[
\# U = \left( \binom{2k+1}{0} + \cdots +\binom{2k+1}{k} \right) 2^{\,n-2k-1} = 2^{2k} \cdot 2^{\,n-2k-1}= 2^{n-1}.
\]
The set $U$ contains no sign-opposite pair $\{\mb{x}, -\mb{x}\}$: if $-\mb{m}_k \preceq \mb{x}$ and $-\mb{m}_k \preceq -\mb{x}$, then the latter gives $\mb{x} \preceq \mb{m}_k$ (\Cref{prop:real}), whence $-\mb{m}_k \preceq \mb{m}_k$; since $-\mb{v} \preceq \mb{v}$ is equivalent to $\mb{0} \preceq \mb{v}$, this contradicts $\mb{m}_k \in Q(n)$.
Now $P(n)$ consists of exactly $2^{n-1}$ sign-opposite pairs, so $U$, having $2^{n-1}$ elements and meeting each pair at most once, meets each pair exactly once.
In particular, for each element $\mb{v}$ of $Q(n)$, either $-\mb{m}_k \preceq_P \mb{v}$ or $-\mb{m}_k \preceq_P -\mb{v}$ holds.  
\end{proof}

We conclude this section by proving that $Q(n)$ is rank-unimodal. At first sight this seems delicate: the rank sizes of $Q(n)$ arise from those of $P(n)$ by subtracting the contributions of $R_+(n)$ and $R_-(n)$, and a difference of unimodal sequences need not be unimodal. The proof combines the structure established above with a classical algebraic property of $M(n)$.

Let $P$ be a finite graded poset with rank decomposition $P = \bigcup_{i=0}^{r} P_i$, and let $\mathbb{C}P_i$ denote the complex vector space with basis $P_i$. A linear map $U$ on $\bigoplus_{i} \mathbb{C}P_i$ is \emph{order-raising} if for every $x \in P_i$ the vector $Ux$ lies in the span of the elements covering $x$. By \cite{Proctor1982,Stanley1980}, the poset $M(n)$---and hence $P(n)$ by \Cref{prop:iso}---admits an order-raising operator $U$ whose iterate $U^{r-2i} \colon \mathbb{C}P_i \to \mathbb{C}P_{r-i}$ is bijective for every $i \leq r/2$; we use only the consequence that $U \colon \mathbb{C}P_{i} \to \mathbb{C}P_{i+1}$ is injective for every $i < r/2$, which holds because the bijection $U^{r-2i}$ factors through it.

\begin{theorem}\label{th:unimodal}
The poset $Q(n)$ is rank-unimodal.
\end{theorem}

\begin{proof}
Write $Q(n) = P(n) \setminus (R_+(n) \cup R_-(n))$, put $q_i = \#(Q(n) \cap P_i)$, and let $r = n(n+1)/2$. Let $U$ be an order-raising operator for $P(n)$ as above.

The set $R_+(n)$ is an up-set of $P(n)$: if $\mb{0} \prec \mb{v}$ and $\mb{v} \prec \mb{w}$, then $\mb{0} \prec \mb{w}$. Its minimum rank is attained at its least element $(1, -1, 1, -1, \ldots)$ and, as computed earlier in this section, equals $n(n+2)/4$ for even $n$ and $(n+1)^2/4$ for odd $n$. Both exceed $\lceil r/2 \rceil = \lceil n(n+1)/4 \rceil$: the differences are at least $n/4 - 1/2 > 0$ and $(n+1)/4 - 1/2 > 0$, respectively. By the symmetry of $P(n)$ (\Cref{prop:sym}), the set $R_-(n) = -R_+(n)$ is a down-set.

We first show that $(q_i)$ is nondecreasing up to the middle. Let $1 \leq i \leq \lceil r/2 \rceil$; we claim that $q_{i-1} \leq q_i$. Let $\mb{v} \in Q(n) \cap P_{i-1}$ and let $\mb{w}$ cover $\mb{v}$. Then $\mb{w} \notin R_-(n)$, since otherwise the down-set $R_-(n)$ would contain $\mb{v}$; and $\mb{w} \notin R_+(n)$ because the rank of $\mb{w}$ is $i \leq \lceil r/2 \rceil$, below the minimum rank of $R_+(n)$. Hence every element covering an element of $Q(n) \cap P_{i-1}$ lies in $Q(n) \cap P_i$, and since $U$ is order-raising,
\[
U\bigl(\operatorname{span}(Q(n) \cap P_{i-1})\bigr) \subseteq \operatorname{span}(Q(n) \cap P_i).
\]
As $i-1 < r/2$, the map $U \colon \mathbb{C}P_{i-1} \to \mathbb{C}P_i$ is injective, and so is its restriction to the subspace $\operatorname{span}(Q(n) \cap P_{i-1})$. Comparing dimensions yields $q_{i-1} \leq q_i$.

Finally, $Q(n)$ is rank-symmetric (\Cref{prop:symQ} with the shared rank function of \Cref{prop:Qgraded} gives $q_i = q_{r-i}$), so the sequence $(q_i)$ is nonincreasing from the middle on. It is therefore unimodal.
\end{proof}

Consequently, $Q(n)$ shares the rank-symmetry, rank-unimodality, and Sperner property of $P(n)$. We summarize the properties of $Q(n)$ established in this section:
\begin{itemize}
  \item The rank function with minimum rank 0 is 
   \[
   \frac{1}{2}\left(\mb{v}\cdot \mb{c}_0 + \frac{n(n+1)}{2}\right) -n
   \]
  for $\mb{v} \in Q(n)$.
  \item Its size is $2^n - 2\binom{n}{\lfloor n/2 \rfloor}$.
  \item The poset $Q(n)$ is rank-symmetric, rank-unimodal, and Sperner.
  \item Its width is the same as that of $P(n)$, i.e., $\nu \left([n], \left\lfloor n(n+1)/4 \right\rfloor \right)$.
  \item Its height is $n(n-1)/2-\left(n-3\ell/2\right)(\ell+1)+1$ for $n \leq 7$ and $(n-2)(n-3)/2+1$ for $n\geq 8$. 
  \item The poset $Q(n)$ also has the same symmetry as $P(n)$: \\
  $\mb{v} \prec_Q \mb{w} \Longleftrightarrow -\mb{w} \prec_Q -\mb{v}$.
  \item The minimal elements and the maximal ones are respectively $-\mb{m}_0, -\mb{m}_1, \ldots,$ $ -\mb{m}_\ell$ and $\mb{m}_0, \mb{m}_1, \ldots, \mb{m}_\ell$. 
\end{itemize}

\section{Relationship between the Partition Problem and the Posets}\label{sec3}

We now discuss the relationship between the partition problem and the posets $P(n)$ and $Q(n)$.
\subsection{Basic Relationship}
We first relate the difference $\mb{p}(S)\cdot \mb{c}$ in the partition problem with the partial orders $\preceq_P$ and $\preceq_Q$. 

\begin{proposition}\label{prop:d&order}
  Given two subsets $S$ and  $S'$ of $[n]$, 
  the inequality $\mb{p}(S) \cdot \mb{c} \leq \mb{p}(S') \cdot \mb{c}$ holds for any instance $\mb{c}$
  if and only if $\mb{p}(S) \preceq_P \mb{p}(S')$.
\end{proposition}

To show the proposition, we first introduce another expression of the difference $\mb{p}(S)\cdot \mb{c}$. 
Let $\mb{d} = (d_1, \dots, d_n)$ defined by $d_i = c_i-c_{i+1}$ for $i \in [n]$, where $c_{n+1}=0$. Note that $\mb{d} \geq \mb{0}$, which is a componentwise inequality. 
Denote the cumulative sum of $\mb{p}(S)$ by $\mb{r}(S)$, that is, $\mb{r}(S)=(p_1, p_1+p_2, \dots, p_1+\dots+p_n)$ when $\mb{p}(S)=(p_1, p_2, \dots, p_n)$; equivalently, $\mb{r}(S)$ is the sequence of heights of the lattice path corresponding to $\mb{p}(S)$ (see \Cref{fig_path}).
Then $\mb{p}(S)\cdot \mb{c} = \mb{r}(S)\cdot \mb{d}$.
In addition, the relation $\mb{p}(S) \preceq_P \mb{p}(S')$ can be succinctly written as the componentwise inequality $\mb{r}(S) \leq \mb{r}(S')$. 

\begin{proof}[Proof of \Cref{prop:d&order}]
  We denote $\mb{p}(S)$ and $\mb{p}(S')$ by $(p_1, \ldots, p_n)$ and $(p'_1, \ldots,$ $ p'_n)$, respectively.

We first assume that 
$\mb{p}(S) \cdot \mb{c} \leq \mb{p}(S') \cdot \mb{c}$ for any $\mb{c}$. 
Suppose, contrary to our claim, that $p_1+\cdots +p_k > p'_1 + \cdots + p'_k$ for some $k \in [n]$.
Take 
\[
\mb{c}' = (\underbrace{1, \ldots, 1}_{k}, 0, \ldots, 0).
\]
Then $\mb{p}(S) \cdot \mb{c}' > \mb{p}(S') \cdot \mb{c}'$, which is a contradiction.

We next prove the converse. 
We assume that $\mb{p}(S) \preceq_P \mb{p}(S')$, that is, $\mb{r}(S) \leq \mb{r}(S')$. 
Suppose, contrary to our claim, that $\mb{p}(S) \cdot \mb{c} > \mb{p}(S') \cdot \mb{c}$ for some $\mb{c}$. Then we have $(\mb{r}(S) - \mb{r}(S'))\cdot \mb{d} > 0$. Since $\mb{r}(S) - \mb{r}(S')$ is nonpositive and $\mb{d}$ is nonnegative, the left-hand side is nonpositive, a contradiction.  
\end{proof}

Since $\preceq_Q$ is the restriction of $\preceq_P$ to $Q(n)$, the same equivalence holds in $Q(n)$, with $\preceq_Q$ in place of $\preceq_P$.

Next, we identify subsets that can be ignored in solving the partition problem. 
\begin{lemma}\label{lem_order&d}
If a subset $S$ of $[n]$ satisfies $\mb{p}(S) \notin Q(n)$, that is, $\mb{p}(S) \prec \mb{0}$ or $\mb{0} \prec \mb{p}(S)$, then $S$ is dispensable.
\end{lemma}

\begin{proof}
We denote $\mb{p}(S)$ by $(p_1, \ldots, p_n)$.
For the complement of $S$, we have $\mb{p}([n]\setminus S) = - \mb{p}(S)$ and $\lvert \mb{p}([n]\setminus S) \cdot \mb{c} \rvert = \lvert \mb{p}(S) \cdot \mb{c} \rvert$. 
Without loss of generality, we can thus assume that $\mb{p}(S) \prec \mb{0}$. 
We need to consider two cases:

\noindent \textbf{Case 1:} $\mb{p}(S)$ contains no $(-1, 1)$ sequence. Since $\mb{p}(S) \prec \mb{0}$, it holds that $S = \emptyset$. 
Take $\{n\}$ as $S'$. 
Then $\mb{p}(S) \prec_P \mb{p}(S')$ and $\mb{p}(S') \prec \mb{0}$.
It follows from the former relation and \Cref{prop:d&order} that  $\mb{p}(S) \cdot \mb{c} \leq \mb{p}(S') \cdot \mb{c}$ holds for any $\mb{c}$. 
The latter relation is equivalent to $\mb{r}(S') \leq \mb{0}$. The inner product of $\mb{r}(S')$ with the nonnegative vector $\mb{d}$ is nonpositive. Thus, $\mb{r}(S') \cdot \mb{d} = \mb{p}(S') \cdot \mb{c}\leq 0$ for any $\mb{c}$. 
Therefore, $\mb{p}(S) \cdot \mb{c} \leq \mb{p}(S') \cdot \mb{c} \leq 0$ for any $\mb{c}$. 

\noindent \textbf{Case 2:} $\mb{p}(S)$ contains $(-1, 1)$ sequences. 
We assume that the left-most sequence lies at the $j$th and $(j+1)$th positions. 
Take $S'$ such that  $\mb{p}(S') = \mathcal{S}^{(j, j+1)}\mb{p}(S)$. Then $\mb{p}(S)\prec_P \mb{p}(S')$ by \Cref{prop:<}.
On the other hand, subtracting $-\mb{p}(S)$ from $\mb{p}(S')$, we have
\[
\mb{p}(S')- (-\mb{p}(S)) = (2 p_1, \ldots, 2 p_{j-1}, 0, 0, 2 p_{j+2}, \ldots, 2 p_n).
\]
This implies $\mb{p}(S') \prec_P -\mb{p}(S)$, since $\mb{p}(S) \prec \mb{0}$.
Consequently, it follows from \Cref{prop:d&order} that $\mb{p}(S)\cdot \mb{c} \leq \mb{p}(S')\cdot \mb{c} \leq -\mb{p}(S)\cdot \mb{c}$ for any $\mb{c}$. 
\end{proof}

Thus, \Cref{lem_order&d} implies that we need not consider elements of $P(n) \setminus Q(n)$. We then show that, in general, all elements of $Q(n)$ must be considered. 

\begin{lemma}\label{lem_order&d&r}
Let a subset $S$ of $[n]$ satisfy $\mb{p}(S) \in Q(n)$.
Then there exists some instance $\mb{c}$ such that $ \lvert \mb{p}(S) \cdot \mb{c}\rvert < \lvert \mb{p}(S')\cdot \mb{c}\rvert$ for any subset $S'$ of $[n]$ that is not equal to $S$ or $[n]\setminus S$.
\end{lemma}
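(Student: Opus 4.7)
The plan is to construct an explicit instance $\mb{c}$ under which $S$ is a perfect partition while no other eligible $S'$ achieves this, yielding $|\mb{p}(S)\cdot\mb{c}|=0 < |\mb{p}(S')\cdot\mb{c}|$ strictly. Using the reformulation introduced just before Proposition \ref{prop_d&order}, any $\mb{d}\geq \mb{0}$ yields an admissible instance via $c_i = d_i + d_{i+1} + \cdots + d_n$, giving $c_1\geq \cdots \geq c_n \geq 0$ and $\mb{p}(S')\cdot \mb{c} = \mb{r}(S')\cdot \mb{d}$ for every $S'$. It therefore suffices to exhibit $\mb{d}^\ast$ in the strictly positive orthant with $\mb{r}(S)\cdot \mb{d}^\ast = 0$ and $\mb{r}(S')\cdot \mb{d}^\ast \neq 0$ for each required $S'$.

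First I would unpack $\mb{p}(S)\in Q(n)$: since $\mb{p}(S)\not\preceq \mb{0}$, some partial sum $r_i(S)$ is strictly positive, and since $\mb{0}\not\preceq \mb{p}(S)$, some partial sum $r_j(S)$ is strictly negative. This opposite-sign pair guarantees that the hyperplane $H_S := \{\mb{d}\in \Br^n : \mb{r}(S)\cdot \mb{d} = 0\}$ meets the open orthant $\Br^n_{>0}$: fix $d_k = 1$ for $k\notin\{i,j\}$ and solve $r_i(S)\,d_i + r_j(S)\,d_j = -\sum_{k\neq i,j} r_k(S)$ with $d_i,d_j>0$, which is possible because one coefficient is positive and the other is negative. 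Hence $H_S^+ := H_S\cap \Br^n_{>0}$ is nonempty and is a relatively open subset of $H_S$.

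Next, for each $S'\neq S,\,[n]\setminus S$ with $\mb{p}(S')\in Q(n)$, the maps $S\mapsto \mb{p}(S)$ and $\mb{p}\mapsto \mb{r}$ are bijective, so $\mb{r}(S')\neq \pm\mb{r}(S)$. Thus the hyperplane $H_{S'} := \{\mb{d} : \mb{r}(S')\cdot \mb{d} = 0\}$ is distinct from $H_S$, and the intersection $H_S\cap H_{S'}$ is a subspace of $H_S$ of codimension one and hence nowhere dense in $H_S^+$. Since there are only finitely many such $S'$, their union cannot cover the nonempty relatively open set $H_S^+$, and I pick $\mb{d}^\ast\in H_S^+$ lying off every $H_{S'}$. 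The induced $\mb{c}$ then gives $\mb{p}(S)\cdot \mb{c} = 0$ while $\mb{p}(S')\cdot \mb{c}\neq 0$ for every eligible $S'$, so $|\mb{p}(S)\cdot \mb{c}| < |\mb{p}(S')\cdot \mb{c}|$ as required.

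The principal obstacle I expect is the nonemptiness of $H_S^+$, since that is the only step in which the hypothesis $\mb{p}(S)\in Q(n)$ is actually used; the lemma would fail for $\mb{p}(S)\in P(n)\setminus Q(n)$ precisely because in that case $\mb{r}(S)$ has a fixed sign, forcing $\mb{r}(S)\cdot \mb{d}$ to be strictly of that sign on $\Br^n_{>0}$ and preventing the perfect-partition construction. Once $H_S^+$ is known to be nonempty and relatively open, the genericity argument ruling out the finitely many coincident hyperplanes is routine.
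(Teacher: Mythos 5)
Your argument is correct, and it takes a genuinely different route from the paper. The paper proves the lemma by contradiction over the integer points: assuming every $\mb{c}$ admits a competing $S'$, it reduces to a covering $C_S=\bigcup_{S'}(C_S\cap C_{S'})$ of the set $C_S=\{\mb{d}\in\mathbb{Z}^n:\mb{r}(S)\cdot\mb{d}=0,\ \mb{d}\geq\mb{0}\}$, passes to a minimal such covering, and uses an infinite ray $K=\{\alpha\mb{d}_1+\mb{d}_2\}$ that can meet each piece in at most one point --- a combinatorial version of the same ``a finite union of lower-dimensional slices cannot cover'' principle. You instead work continuously: you show the hyperplane $H_S$ meets the open positive orthant (this is exactly where $\mb{p}(S)\in Q(n)$ enters, and you correctly isolate it as the crux --- it is also the step the paper handles rather tersely when asserting $m\geq 2$), and then pick a generic point of $H_S^+$ off the finitely many proper subspaces $H_S\cap H_{S'}$. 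Your observation that $\mb{r}(S')\neq\pm\mb{r}(S)$ suffices for $H_{S'}\neq H_S$ is fine, but deserves one more line: a priori you need $\mb{r}(S')$ not proportional to $\mb{r}(S)$, which follows because the first entries of both vectors are $\pm1$, forcing any proportionality constant to be $\pm1$. The one genuine loose end is integrality: the partition problem is stated for (nonnegative) integers, and your $\mb{d}^\ast$ is a real vector. This is easily repaired --- avoiding finitely many hyperplanes is an open condition, $H_S$ is defined by an integer normal so its rational points are dense in $H_S^+$, and clearing denominators yields an integer $\mb{d}$ and hence an integer instance $\mb{c}$ with $c_1>\cdots>c_n>0$ --- but it should be said, since the paper deliberately keeps $\mb{d}\in\mathbb{Z}^n$ throughout. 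What your approach buys is a direct construction (an explicit perfect partition uniquely achieved by $S$) rather than a proof by contradiction, at the modest cost of invoking a density/genericity argument in $\Br^n$ plus the rationalization step.
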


\begin{proof}
We will prove this by contradiction. 
Suppose that for any $\mb{c}$ there exists some $S'$ such that 
$ \lvert \mb{p}(S') \cdot \mb{c}\rvert \leq \lvert \mb{p}(S)\cdot \mb{c}\rvert$. 
In particular, $\mb{p}(S)\cdot \mb{c}=0$ implies $\mb{p}(S') \cdot \mb{c}=0$. 
Considering another expression of the difference, we deduce that 
for $\mb{d}\geq 0$ satisfying $\mb{r}(S)\cdot \mb{d}=0$ there exists some $S'$ such that $\mb{r}(S') \cdot \mb{d}=0$.  
By regarding these expressions as inequalities for $\mb{d}$, we obtain 
\[
 \{\mb{d}\in \mathbb{Z}^n : \mb{r}(S)\cdot \mb{d}=0, \;\mb{d}\geq \mb{0}\} \subset  
 \bigcup_{S'\neq S, [n]\setminus S}
 \{\mb{d}\in \mathbb{Z}^n : \mb{r}(S')\cdot \mb{d}=0,\;\mb{d}\geq \mb{0}\}. 
\]
We will denote the left-hand side and the set in the union on the right-hand side by $C_S$ and $C_{S'}$ respectively. From the notation and the expression above, we can deduce that $C_S$ is equal to the union of the intersections of $C_S$ with each $C_{S'}$, that is,
\[
 C_S = \bigcup_{S'\neq S, [n]\setminus S} (C_S \cap C_{S'}). 
\]
Now choose a minimal decomposition
\begin{equation}\label{eq:decomp}
C_S = \bigcup_{i=1}^m (C_S \cap C_{S'_i}) 
\end{equation}
with $m$ minimal. We claim that $m \geq 2$. First, $\mb{r}(S')$ is not proportional to $\mb{r}(S)$ for any admissible $S'$: since the first entries of cumulative sums are $\pm 1$, a relation $\mb{r}(S') = \lambda\, \mb{r}(S)$ would force $\lambda = \pm 1$, that is, $S' = S$ or $S' = [n] \setminus S$. Second, write $\mb{r}(S) = (r_1, \dots, r_n)$ and let $\mb{e}_i$ denote the $i$th standard unit vector. Since $\mb{p}(S) \in Q(n)$, the vector $\mb{r}(S)$ has both a positive entry and a negative one, so $C_S$ contains $\mb{e}_i$ for every $i$ with $r_i = 0$ and $\lvert r_j \rvert\, \mb{e}_i + r_i\, \mb{e}_j$ for all $i, j$ with $r_i > 0 > r_j$; these vectors span the hyperplane $\{\mb{d} \in \mathbb{R}^n \colon \mb{r}(S)\cdot \mb{d} = 0\}$. Hence $C_S \not\subset C_{S'}$ for every admissible $S'$, and a single term cannot cover $C_S$ in (\ref{eq:decomp}). By minimality, no $C_S \cap C_{S'_i}$ is contained in the union of other $C_S \cap C_{S'_j}$'s.
Take vectors $\mb{d}_1$ and $\mb{d}_2$ such that $\mb{d}_1 \in C_S \cap C_{S'_1}$ but $\mb{d}_1 \notin C_S \cap C_{S'_i}$ for $i \neq 1$ and $\mb{d}_2 \in C_S \cap C_{S'_2}$ but $\mb{d}_2 \notin C_S \cap C_{S'_i}$ for $i \neq 2$. 
Consider the intersection of $C_S \cap C_{S'_i}$ and the infinite set $K = \{ \alpha \mb{d}_1 + 
 \mb{d}_2 : \alpha \in \mathbb{Z}, \alpha \geq 0\} \subset C_S$. 
If $i=1$, then the intersection is empty since $\mb{r}(S'_i)\cdot \mb{d}_1 = 0$ and $\mb{r}(S'_i)\cdot \mb{d}_2 \neq 0$. 
If $i \neq 1$, then the intersection is at most one point since $\mb{r}(S'_i)\cdot \mb{d}_1 \neq 0$.
So $K$ intersects each $C_S \cap C_{S'_i}$ in at most one point. 
Since (\ref{eq:decomp}) holds, $K$ is a finite set, a contradiction. 
\end{proof}





We thus can prove \Cref{th:sol}.
\begin{proof}[Proof of \Cref{th:sol}]
If $\mb{p}(S) \notin Q(n)$, that is, $\mb{p}(S) \prec \mb{0}$ or $\mb{0} \prec \mb{p}(S)$, then $S$ is dispensable by \Cref{lem_order&d}. If $\mb{p}(S) \in Q(n)$, then \Cref{lem_order&d&r} yields an instance $\mb{c}$ with $\lvert \mb{p}(S)\cdot\mb{c} \rvert < \lvert \mb{p}(S')\cdot\mb{c} \rvert$ for every $S'$ not equal to $S$ or $[n]\setminus S$, so $S$ is not dispensable. 
\end{proof}

We now consider a specific instance $\mb{c}$.
We can prune candidate solutions based on the partial order.
\begin{proposition}\label{prop:sign}
  Given an instance $\mb{c}$, for some subset $S$ of $[n]$, the following statements hold:
  \begin{enumerate}
    \item If $\mb{p}(S)\cdot \mb{c} \geq 0$, 
  then $\lvert \mb{p}(S)\cdot \mb{c}\rvert \leq \lvert \mb{p}(S')\cdot \mb{c}\rvert$ for any $S'$ such that $\mb{p}(S') \preceq_P -\mb{p}(S)$ or $\mb{p}(S) \preceq_P \mb{p}(S')$.
  \item If $\mb{p}(S)\cdot \mb{c} \leq 0$, 
then $\lvert \mb{p}(S)\cdot \mb{c}\rvert \leq \lvert \mb{p}(S')\cdot \mb{c}\rvert$ for any $S'$ such that $\mb{p}(S') \preceq_P \mb{p}(S)$ or $-\mb{p}(S) \preceq_P \mb{p}(S')$.
    \end{enumerate}
\end{proposition}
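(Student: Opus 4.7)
The plan is to reduce both statements to Proposition \ref{prop_d&order}, which converts the partial order $\mb{p}(S)\preceq_P \mb{p}(S')$ into the signed inequality $\mb{p}(S)\cdot\mb{c}\leq \mb{p}(S')\cdot\mb{c}$. The forward direction of that proposition applies to any fixed instance $\mb{c}$ because its proof uses only $\mb{d}\geq\mb{0}$, which is guaranteed by the standing assumption $c_1\geq\cdots\geq c_n\geq 0$. Once signed inequalities are in hand, the sign hypothesis on $\mb{p}(S)\cdot\mb{c}$ is exactly what is needed to promote them to absolute-value inequalities by squeezing one side against $0$.

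For case (1), fix $S$ with $\mb{p}(S)\cdot\mb{c}\geq 0$. If $\mb{p}(S)\preceq_P \mb{p}(S')$, then Proposition \ref{prop_d&order} yields
\[
0 \;\leq\; \mb{p}(S)\cdot\mb{c} \;\leq\; \mb{p}(S')\cdot\mb{c},
\]
so $\lvert\mb{p}(S)\cdot\mb{c}\rvert\leq \lvert\mb{p}(S')\cdot\mb{c}\rvert$. If instead $\mb{p}(S')\preceq_P -\mb{p}(S)$, the same proposition gives $\mb{p}(S')\cdot\mb{c}\leq -\mb{p}(S)\cdot\mb{c}\leq 0$, where the second inequality uses the sign hypothesis. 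Hence $\lvert\mb{p}(S')\cdot\mb{c}\rvert \geq \mb{p}(S)\cdot\mb{c}=\lvert\mb{p}(S)\cdot\mb{c}\rvert$.

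Case (2) is obtained by the same two-line argument with signs flipped. Alternatively, one can apply case (1) to the complement $[n]\setminus S$: indeed $\mb{p}([n]\setminus S)=-\mb{p}(S)$ satisfies $\mb{p}([n]\setminus S)\cdot\mb{c}\geq 0$, the order hypotheses on $S'$ in (2) are exactly the hypotheses of (1) after negation by Proposition \ref{prop:sym}, and $\lvert\mb{p}(S)\cdot\mb{c}\rvert=\lvert\mb{p}([n]\setminus S)\cdot\mb{c}\rvert$.

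No serious obstacle arises; the statement is essentially a packaging of Proposition \ref{prop_d&order} with an elementary sign analysis. The only subtlety worth flagging is that Proposition \ref{prop_d&order}, though phrased with the quantifier ``for any $\mb{c}$,'' is being invoked here for a fixed instance — this is legitimate because its forward direction depends on $\mb{c}$ only through $\mb{d}\geq \mb{0}$, a condition automatically satisfied by every instance of the partition problem.
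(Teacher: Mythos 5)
Your proof is correct and follows essentially the same route as the paper: both apply Proposition \ref{prop_d&order} to convert each order hypothesis into a signed inequality and then squeeze against $0$ using the sign assumption on $\mb{p}(S)\cdot\mb{c}$. The remark about invoking the ``for any $\mb{c}$'' proposition at a fixed instance, and the alternative derivation of case (2) from case (1) via the complement, are harmless additions that do not change the argument.
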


\begin{proof}
We show the first statement.
From \Cref{prop:d&order} it follows that $\mb{p}(S')\cdot \mb{c} \leq -\mb{p}(S)\cdot \mb{c}$ or $\mb{p}(S)\cdot \mb{c} \leq \mb{p}(S')\cdot \mb{c}$.
Hence, $\mb{p}(S')\cdot \mb{c} \leq -\mb{p}(S)\cdot \mb{c} \leq 0$ or $0 \leq \mb{p}(S)\cdot \mb{c} \leq \mb{p}(S')\cdot \mb{c}$. 
Either case implies that $\lvert \mb{p}(S)\cdot \mb{c}\rvert \leq \lvert \mb{p}(S')\cdot \mb{c}\rvert$.

In the same manner, we can prove the second.  
\end{proof}

We can show that a similar statement, which is \Cref{th:signQ}, holds for $Q(n)$.
\begin{proof}[Proof of \Cref{th:signQ}]
It immediately follows from \Cref{prop:sign}.  
\end{proof}

\begin{remark}
The width of $Q(n)$ is $\Theta(2^n/ n^{3/2})$ for $n$ congruent to 0 or 3 modulo 4. This exponential width indicates the hardness of the partition problem: the instance-independent dominance relation (\Cref{prop:d&order}) holds only between comparable elements, so it orders none of the exponentially many pairwise-incomparable elements of a maximum antichain, which must instead be compared using the instance.

\end{remark}

Having described this global obstruction, we now bound the reduction achievable by a single application of \Cref{th:signQ}.
For $\mb{v} \in Q(n)$, let $u(\mb{v})$ and $d(\mb{v})$ denote the numbers of elements of $Q(n)$ greater than $\mb{v}$ and less than $\mb{v}$, respectively.
Suppose that we test the sign of the single inner product $\mb{v}\cdot \mb{c}$.
If $\mb{v}\cdot \mb{c} \geq 0$, then by \Cref{th:signQ} the $u(\mb{v})$ elements greater than $\mb{v}$ and the $u(\mb{v})$ elements less than $-\mb{v}$ (their negations, by \Cref{prop:symQ}) are discarded in favor of $\mb{v}$; these $2u(\mb{v})$ elements are distinct and different from $\pm\mb{v}$, since $\mb{v} \prec_Q \mb{w}$ and $\mb{w} \prec_Q -\mb{v}$ would imply $\mb{v} \prec \mb{0}$.
If $\mb{v}\cdot \mb{c} \leq 0$, the $2d(\mb{v})$ elements less than $\mb{v}$ or greater than $-\mb{v}$ are discarded likewise.
Whichever sign occurs, at least
\[
\gamma(\mb{v}) = 2 \min \{u(\mb{v}),\, d(\mb{v})\}
\]
candidate solutions other than $\pm\mb{v}$ are discarded.
\Cref{tab:min} shows the maximum of $\gamma(\mb{v})$ over $Q(n)$ together with a maximizing element $\mb{v}^*$ for $5 \leq n \leq 15$; in this range the maximizing pair $(\mb{v}^*, -\mb{v}^*)$ is unique, and $\mb{v}^*$ is an almost alternating vector. The maximum grows exponentially. Nevertheless, no single test is guaranteed to discard a constant fraction of the candidate solutions:

\begin{table}[ht]
\caption{The maximum guaranteed reduction $\gamma(\mb{v}^*)$ of the candidate solutions by a single sign test and a maximizing element $\mb{v}^*$ for $5 \leq n \leq 15$.}
\label{tab:min}
\centering
\begin{tabular}{rr@{\qquad}r@{\qquad}r@{}r@{}r@{}r@{}r@{}r@{}r@{}r@{}r@{}r@{}r@{}r@{}r@{}r@{}r@{}r}
\hline
$n$ & $\#Q(n)$ & $\gamma(\mb{v}^*)$ & \multicolumn{16}{c}{$\mb{v}^*$}\\
\hline
 5& 12& 4& (&1,&$-$1,&$-$1,&1,&$-$1)\\
 6& 24& 8& (&1,&$-$1,&$-$1,&1,&$-$1,&1)\\
 7& 58& 20& (&1,&$-$1,&1,&$-$1,&$-$1,&$-$1,&1)\\
 8& 116& 38 & (&1,&$-$1,&$-$1,&1,&$-$1,&1,&$-$1,&1)\\
 9& 260& 84 & (&1,&$-$1,&1,&$-$1,&$-$1,&1,&$-$1,&$-$1,&1)\\
 10 & 520& 152 & (&1,&$-$1,&1,&$-$1,&$-$1,&1,&$-$1,&$-$1,&1,&$-$1)\\
 11 & 1124& 324 & (&1,& $-$1,& 1,& $-$1,& $-$1,& 1,& $-$1,& 1,& $-$1,& $-$1,& 1)\\
 12 & 2248& 594 & (&1,& $-$1,& 1,& $-$1,& $-$1,& 1,& $-$1,& 1,& $-$1,& $-$1,& 1,& $-$1)\\
 13 & 4760& 1230 & (&1,& $-$1,& 1,& $-$1,& $-$1,& 1,& $-$1,& 1,& $-$1,& 1,& $-$1,& $-$1,& 1)\\
 14 & 9520& 2290 & (&1,& $-$1,& 1,& $-$1,& 1,& $-$1,& $-$1,& $-$1,& 1,& $-$1,& 1,& $-$1,& 1,&$-$1)\\
 15 & 19898& 4666 & (&1,& $-$1,& 1,& $-$1,& $-$1,& 1,& $-$1,& 1,& $-$1,& 1,& $-$1,& 1,& $-$1,& $-$1,& \phantom{$-$}1)\\
\hline
\end{tabular}
\end{table}

\begin{proposition}\label{prop:singlequery}
For every $\mb{v} \in Q(n)$,
\[
\gamma(\mb{v}) \leq \frac{2^{n+2}}{(n+1)^{1/4}}.
\]
In particular, $\gamma(\mb{v})/\#Q(n) = O(n^{-1/4})$: no single sign test can, for example, halve the number of candidate solutions.
\end{proposition}

\begin{proof}
Let $u_P(\mb{v})$ and $d_P(\mb{v})$ denote the sizes of the up-set and the down-set of $\mb{v}$ in $P(n)$ (both containing $\mb{v}$ itself), so that $u(\mb{v}) \leq u_P(\mb{v})$ and $d(\mb{v}) \leq d_P(\mb{v})$.
The product $u_P(\mb{v})\, d_P(\mb{v})$ counts the pairs $(\mb{w}, \mb{w}') \in P(n) \times P(n)$ with $\mb{w}' \preceq_P \mb{v} \preceq_P \mb{w}$. For every such pair the cumulative sums of $\mb{w} - \mb{w}'$ are nonnegative.
The entries of $\mb{w} - \mb{w}'$ lie in $\{2, 0, -2\}$, where the entry $0$ arises in two ways ($(w_i, w'_i) = (1,1)$ or $(-1,-1)$) and the entries $\pm 2$ in one way each ($(1,-1)$ and $(-1,1)$, respectively); the cumulative sums of $\mb{w} - \mb{w}'$ are nonnegative exactly when the $\pm1$-sequence formed by the signs of the nonzero entries has nonnegative cumulative sums. The number of such sequences of length $m$ is $X_m = \binom{m}{\lfloor m/2 \rfloor}$, as shown in the proof of \Cref{prop:Qsize}. Grouping the pairs by the set of indices of the nonzero entries, we obtain
\[
u_P(\mb{v})\, d_P(\mb{v}) \leq \sum_{m=0}^{n} \binom{n}{m}\, 2^{n-m} X_m.
\]
We claim that $X_m \leq 2^m/\sqrt{m+1}$, that is, $c_m \leq 1$ for $c_m = X_m^2 (m+1)/4^m$: indeed, $c_0 = 1$, and the identities $X_{2k} = 2X_{2k-1}$ and $X_{2k+1} = X_{2k}(2k+1)/(2k+2) \cdot 2$ give $c_{2k} = c_{2k-2}(4k^2-1)/(4k^2)$ and $c_{2k+1} = c_{2k}(2k+1)/(2k+2)$, so the even-indexed terms are decreasing and each odd-indexed term is less than its predecessor.
By the claim, the Cauchy--Schwarz inequality, and the identity $\sum_{m=0}^n \binom{n}{m}/(m+1) = (2^{n+1}-1)/(n+1)$,
\[
u_P(\mb{v})\, d_P(\mb{v}) \leq 2^n \sum_{m=0}^{n} \frac{\binom{n}{m}}{\sqrt{m+1}}
\leq 2^n \sqrt{2^n \cdot \frac{2^{n+1}-1}{n+1}}
\leq \frac{\sqrt{2}\; 4^n}{\sqrt{n+1}}.
\]
Therefore
$\gamma(\mb{v}) \leq 2\sqrt{u(\mb{v})\, d(\mb{v})} \leq 2\sqrt{u_P(\mb{v})\, d_P(\mb{v})} \leq 2^{n+5/4}/(n+1)^{1/4} \leq 2^{n+2}/(n+1)^{1/4}$.
The last claim of the proposition follows since $\#Q(n) = 2^n - 2\binom{n}{\lfloor n/2 \rfloor} = (1-o(1))\, 2^n$.
\end{proof}


\subsection{Optimality Criterion and Polynomially Solvable Cases}
At the heart of this section is the optimality criterion of \Cref{th:opt}, a necessary and sufficient condition for an element of $Q(n)$ to attain the optimal value. In general its third condition ranges over all elements incomparable to the witnessing element and its negative, and may involve exponentially many inequalities; for the minimal and maximal elements of $Q(n)$ and the elements near them, however, this set is small, so the criterion collapses to a constant number of inequalities. We emphasize that, in general, \Cref{th:opt} is a structural characterization rather than an algorithm: finding a regular element $\mb{w}$ witnessing the optimality of a given $\mb{v}$ is not obviously easier than the problem itself. The resulting polynomially solvable cases follow as corollaries (\Cref{th:min}, \Cref{cor:maximal}, \Cref{cor:extended}). We first prove \Cref{th:opt}, beginning with some terminology and an auxiliary sufficient condition (\Cref{prop:N}).

For $\mb{v} \in Q(n)$ we write $N_{\rm C}(\mb{v})$ for the union of the up-sets and down-sets of $\mb{v}$ and of $-\mb{v}$, that is, the set of elements comparable to $\mb{v}$ or $-\mb{v}$; its complement in $Q(n)$ is the set of elements incomparable to both $\mb{v}$ and $-\mb{v}$.

\begin{proposition}\label{prop:N}
    Given an instance $\mb{c}$ of the partition problem, let $\mb{v}$ be a non-minimal element of $Q(n)$.
    If $\mb{v}\cdot \mb{c} \geq 0$ and $(\mb{v}+\mb{w})\cdot \mb{c} \leq 0$ for every element $\mb{w}$ of $Q(n)$ covered by $\mb{v}$, then
    $\min_{\mb{x} \in N_{\rm C}(\mb{v})} \lvert \mb{x}\cdot \mb{c} \rvert = \mb{v}\cdot \mb{c}$.
\end{proposition}

\begin{proof}
Let $\mb{x} \in N_{\rm C}(\mb{v})$, not equal to $\pm \mb{v}$. By the symmetry of $Q(n)$ we only have to consider the two cases:  $\mb{v}\prec_Q \mb{x}$ and $\mb{x}\prec_Q \mb{v}$.
If $\mb{v}\prec_Q \mb{x}$, then by \Cref{prop:sign}(1) we have $|\mb{v}\cdot \mb{c}| \leq |\mb{x}\cdot \mb{c}|$. 
On the other hand, if $\mb{x}\prec_Q \mb{v}$, then $\mb{x} \preceq_Q \mb{w}$ for some element $\mb{w}$ covered by $\mb{v}$. The condition $(\mb{v}+\mb{w})\cdot \mb{c} \leq 0$ gives $\mb{w}\cdot \mb{c} \leq - \mb{v}\cdot \mb{c} \leq 0$, hence $|\mb{v}\cdot \mb{c}| \leq |\mb{w}\cdot \mb{c}|$; by \Cref{prop:sign}(2) we have $|\mb{w}\cdot \mb{c}| \leq |\mb{x}\cdot \mb{c}|$. Therefore, we obtain $|\mb{v}\cdot \mb{c}| \leq |\mb{x}\cdot \mb{c}|$.
\end{proof}

We now prove \Cref{th:opt}, the necessary and sufficient condition for optimality stated in \Cref{ssec:main}.

\begin{proof}[Proof of \Cref{th:opt}]
First we prove the necessity. The optimal value is nonnegative, so $\mb{v}\cdot \mb{c} \geq 0$. Suppose $\mb{v}\cdot \mb{c} > 0$. Let
\[
L = \{\mb{u} \in Q(n) : \mb{u} \preceq_Q \mb{v} \text{ and } \mb{u}\cdot \mb{c} = \mb{v}\cdot \mb{c}\},
\]
which is nonempty since $\mb{v} \in L$, and let $\mb{w}$ be a minimal element of $L$. Then $\mb{w}\preceq_Q \mb{v}$ and $\mb{w}\cdot \mb{c} = \mb{v}\cdot \mb{c} > 0$, which is condition (i). Moreover, $\mb{w}$ is regular: if $\mb{u}\prec_Q \mb{w}$, then $\mb{u}\cdot \mb{c} \leq \mb{w}\cdot \mb{c}$ by \Cref{prop:d&order}, and $\mb{u}\cdot \mb{c} = \mb{w}\cdot \mb{c}$ would give $\mb{u}\in L$, contradicting the minimality of $\mb{w}$; hence $\mb{u}\cdot \mb{c} < \mb{w}\cdot \mb{c}$.
Since $\mb{w}\cdot \mb{c}$ is also the optimal value, $\mb{w}\cdot \mb{c} \leq |\mb{w}_1\cdot \mb{c}|$ for every element $\mb{w}_1$ covered by $\mb{w}$. By regularity $\mb{w}_1\cdot \mb{c} < \mb{w}\cdot \mb{c}$; combined with $\mb{w}\cdot \mb{c} \leq |\mb{w}_1\cdot \mb{c}|$ this forces $\mb{w}_1\cdot \mb{c} < 0$, whence $|\mb{w}_1\cdot \mb{c}| = -\mb{w}_1\cdot \mb{c} \geq \mb{w}\cdot \mb{c}$, that is, $(\mb{w}+\mb{w}_1)\cdot \mb{c} \leq 0$, which is condition (ii). Likewise, for every element $\mb{w}_2$ incomparable to both $\mb{w}$ and $-\mb{w}$, optimality gives $\mb{w}\cdot \mb{c} \leq |\mb{w}_2\cdot \mb{c}|$, that is, $\mb{w}\cdot \mb{c}\leq \mb{w}_2\cdot \mb{c}$ or $\mb{w}_2\cdot \mb{c} \leq -\mb{w}\cdot \mb{c}$, which is condition (iii).

Next, we prove the sufficiency. If $\mb{v} \cdot \mb{c} = 0$, then $\mb{v} \cdot \mb{c}$ is the optimal value, since the optimal value is nonnegative. Suppose now that there exists a regular element $\mb{w}$ as in the statement, satisfying conditions (i)--(iii).
We show that $\mb{w}\cdot \mb{c} \leq \lvert\mb{x}\cdot \mb{c}\rvert$ for every $\mb{x} \in Q(n)$.
There are two cases:

\noindent \textbf{Case 1:}
$\mb{x}$ is in $N_{\rm C}(\mb{w})$, that is, $\mb{x}$ is comparable to $\mb{w}$ or $-\mb{w}$.
Then condition (ii) and \Cref{prop:N} give $\mb{w}\cdot \mb{c} \leq \lvert\mb{x}\cdot \mb{c}\rvert$. (When $\mb{w}$ is minimal, condition (ii) is vacuous; in this case \Cref{lem:compara} shows that $\mb{w} \preceq_Q \mb{x}$ or $\mb{x} \preceq_Q -\mb{w}$ for every $\mb{x} \in Q(n)$, and \Cref{th:signQ} gives the same conclusion.)

\noindent \textbf{Case 2:}
$\mb{x}$ is not in $N_{\rm C}(\mb{w})$, that is, $\mb{x}$ is incomparable to the two elements.
It follows from condition (iii) that $(\mb{w} + \mb{x})\cdot \mb{c} \leq 0$ or  $(\mb{w} - \mb{x})\cdot \mb{c} \leq 0$, which means that $\mb{w}\cdot \mb{c} \leq \lvert \mb{x}\cdot \mb{c}\rvert$.

Combining the two cases, $\mb{v}\cdot \mb{c} = \mb{w}\cdot \mb{c} \leq \lvert\mb{x}\cdot \mb{c}\rvert$ for every $\mb{x} \in Q(n)$; since the value $\mb{v}\cdot \mb{c} = \lvert\mb{v}\cdot \mb{c}\rvert$ is attained, it is the optimal value.
\end{proof}

We now specialize \Cref{th:opt} to the extreme elements of $Q(n)$, beginning with the minimal ones.

\begin{proof}[Proof of \Cref{th:min}]
Apply \Cref{th:opt} to $\mb{v} = -\mb{m}_k$. Since $-\mb{m}_k$ is minimal, the only element $\mb{w} \preceq_Q \mb{v}$ is $\mb{v}$ itself, which is regular (it has no element below it) and covers no elements; moreover, by \Cref{lem:compara} every element of $Q(n)$ is comparable to $-\mb{m}_k$ or $\mb{m}_k$, so none is incomparable to both. Thus, for the witness $\mb{w} = \mb{v}$, conditions (ii) and (iii) of \Cref{th:opt} hold vacuously. Consequently, if $-\mb{m}_k\cdot\mb{c}\geq 0$, then either $-\mb{m}_k\cdot\mb{c}=0$ or condition (i) holds, and \Cref{th:opt} shows that $-\mb{m}_k\cdot\mb{c}$ is the optimal value. Conversely, if $-\mb{m}_k\cdot\mb{c}$ is the optimal value, then $-\mb{m}_k\cdot\mb{c}\geq 0$ because the optimal value is nonnegative.
\end{proof}



The maximal elements, like the minimal ones, have no elements incomparable to both themselves and their negatives (\Cref{lem:compara}); this yields \Cref{cor:maximal}.

\begin{proof}[Proof of \Cref{cor:maximal}]
First let $n \geq 5$. By \Cref{lem:compara}, every element of $Q(n)$ is comparable to $\mb{m}_k$ or $-\mb{m}_k$, so none is incomparable to both. The elements covering $-\mb{m}_k$ in $P(n)$ are $\mathcal{S}^{(k,k+1)}(-\mb{m}_k)$ for $k\neq 0$ and $\mathcal{A}^{(n)}(-\mb{m}_k)$ for $k\neq(n-1)/2$ (\Cref{prop:cover}); for $n \geq 5$ their sequences of cumulative sums attain both signs, so they lie in $Q(n)$ and, by \Cref{cor:coverQ}, they are exactly the elements covering $-\mb{m}_k$ in $Q(n)$. By the symmetry of $Q(n)$ (\Cref{prop:symQ}), the elements covered by $\mb{m}_k$ are their negatives. Thus, with $\mb{m}_k$ in place of $\mb{w}$, the inequalities (i)--(iii) coincide with conditions (i)--(ii) of \Cref{th:opt}, and condition (iii) is vacuous.

If (i)--(iii) hold, then \Cref{prop:N} gives $\mb{m}_k\cdot\mb{c}\leq\lvert\mb{x}\cdot\mb{c}\rvert$ for every $\mb{x}\in Q(n)$, so $\mb{m}_k\cdot\mb{c}$ is the optimal value.

For $n \in \{3, 4\}$ some of the vectors $\mathcal{S}^{(k,k+1)}(-\mb{m}_k)$ and $\mathcal{A}^{(n)}(-\mb{m}_k)$ fall outside $Q(n)$, and the corresponding inequalities are merely additional requirements, which cannot hurt sufficiency. Since $Q(3) = \{\pm\mb{m}_0\}$ (where $\mb{m}_1 = -\mb{m}_0$) and $Q(4) = \{\pm\mb{m}_0, \pm\mb{m}_1\}$, the claim is verified directly; for instance, for $n=4$ and $k=1$, condition (ii) holds automatically (it reads $c_4 \leq c_3$), while conditions (i) and (iii) read $\mb{m}_1\cdot\mb{c} \geq 0$ and $c_1 \leq c_2 + c_3$, and the latter gives $\mb{m}_1\cdot\mb{c} \leq c_4 \leq \lvert \mb{m}_0\cdot\mb{c} \rvert$.
\end{proof}

In general, condition (iii) of \Cref{th:opt} contains a large number of inequalities and it is difficult to write down the condition. If the difference between $Q(n)$ and $N_{\rm C}(\mb{v})$ is small, however, the condition contains only a few inequalities and can be written down explicitly, leading to further polynomially solvable cases. We next investigate this difference for the covering elements of the minimal elements. Even the covers of the minimal elements---arguably the simplest non-minimal elements of $Q(n)$---are already incomparable to exponentially many elements. The next two propositions count them exactly for the two cover types, the swap covers $\mathcal{S}^{(k,k+1)}(-\mb{m}_k)$ and the addition covers $\mathcal{A}^{(n)}(-\mb{m}_k)$, which behave oppositely in $k$; \Cref{rem:asymp} draws the asymptotic consequences.

For $\mb{x} \in P(n)$ we write $\sigma_i(\mb{x}) = x_1 + \cdots + x_i$ for the $i$th entry of its cumulative sum (so that $\mb{r}(S) = (\sigma_1(\mb{p}(S)), \dots, \sigma_n(\mb{p}(S)))$), so that $\mb{x} \preceq \mb{y}$ if and only if $\sigma_i(\mb{x}) \leq \sigma_i(\mb{y})$ for all $i$; recall $\sigma_{2k+1}(-\mb{m}_k) = 1$.

\begin{proposition}\label{prop:countS}
For $k \in \{2, \dots, \ell\}$, the number of elements of $Q(n)$ incomparable to both $\mathcal{S}^{(k,k+1)}(-\mb{m}_k)$ and its negative is $2^{n-2k}-2$.
\end{proposition}

\begin{proof}
Write $\mb{v} = \mathcal{S}^{(k,k+1)}(-\mb{m}_k)$, let $I$ be the set of elements incomparable to both $\mb{v}$ and $-\mb{v}$, and put $U = \{\mb{x} \in Q(n) : -\mb{m}_k \preceq_Q \mb{x}\}$. By \Cref{prop:symQ}, negation is an order-reversing bijection of $Q(n)$ that maps $I$ to itself and $U$ to $L := \{\mb{x} \in Q(n) : \mb{x} \preceq_Q \mb{m}_k\}$. Every $\mb{x} \in Q(n)$ lies in $U$ or $L$ (\Cref{lem:compara}), and $U \cap L = \emptyset$ since $-\mb{m}_k \not\preceq_Q \mb{m}_k$: the relation $-\mb{m}_k \preceq_Q \mb{m}_k$ would force $-\mb{m}_k \preceq \mb{0}$, contradicting $-\mb{m}_k \in Q(n)$. Hence $\#I = 2\,\#(I \cap U)$.

The swap $\mathcal{S}^{(k,k+1)}$ raises $\sigma_k$ by $2$ and leaves all other cumulative sums unchanged, so $\mb{v}$ and $-\mb{m}_k$ differ only in that $\sigma_k(\mb{v}) = -k+2$. Thus for $\mb{x} \in U$,
\[
\mb{x} \not\succeq_Q \mb{v} \Longleftrightarrow \sigma_k(\mb{x}) < -k+2 \Longleftrightarrow \sigma_k(\mb{x}) = -k,
\]
the last step because $\sigma_k(\mb{x}) \geq \sigma_k(-\mb{m}_k) = -k$ and $\sigma_k(\mb{x}) \equiv k \pmod 2$. Now $\sigma_k(\mb{x}) = -k$ forces $x_1 = \cdots = x_k = -1$. By the description of the up-set of $-\mb{m}_k$ in the proof of \Cref{lem:compara}---write $\mb{x} = (\mb{x}_1, \mb{x}_2)$ with $\mb{x}_1 \in P(2k+1)$ having at most $k$ entries equal to $-1$ and $\mb{x}_2 \in P(n-2k-1)$ arbitrary---this forces $\mb{x}_1 = (\underbrace{-1, \ldots, -1}_{k}, \underbrace{1, \ldots, 1}_{k+1})$ and leaves $\mb{x}_2$ free; each such $\mb{x}$ lies in $Q(n)$, having $\sigma_k(\mb{x}) = -k < 0$ and $\sigma_{2k+1}(\mb{x}) = 1 > 0$. Hence $\#\{\mb{x} \in U : \mb{x} \not\succeq_Q \mb{v}\} = 2^{\,n-2k-1}$.

This set contains $-\mb{m}_k$, which is less than $\mb{v}$; every other member $\mb{x}$ is incomparable to both $\mb{v}$ and $-\mb{v}$. It is incomparable to $\mb{v}$: here $\mb{x} \not\succeq_Q \mb{v}$, and $-\mb{m}_k$ is the only element of $U$ below $\mb{v}$, so $\mb{x} \neq -\mb{m}_k$ gives $\mb{x} \not\preceq_Q \mb{v}$. It is incomparable to $-\mb{v}$: since $k \geq 2$ we have $(-\mb{v})_1 = 1$, so $\sigma_1(\mb{x}) = -1 < 1 = \sigma_1(-\mb{v})$ gives $\mb{x} \not\succeq_Q -\mb{v}$, while $\sigma_{2k+1}(-\mb{v}) = -\sigma_{2k+1}(\mb{v}) = -1 < 1 = \sigma_{2k+1}(\mb{x})$ gives $\mb{x} \not\preceq_Q -\mb{v}$. Conversely, any $\mb{x} \in U$ incomparable to $\mb{v}$ satisfies $\mb{x} \not\succeq_Q \mb{v}$, hence belongs to this set and differs from $-\mb{m}_k$. Therefore $\#(I \cap U) = 2^{\,n-2k-1} - 1$ and $\#I = 2^{\,n-2k} - 2$.
\end{proof}

\begin{proposition}\label{prop:countA}
For $k \in \{0, 1, \dots, \lfloor (n-4)/2 \rfloor\}$, the number of elements of $Q(n)$ incomparable to both $\mathcal{A}^{(n)}(-\mb{m}_k)$ and its negative is $\binom{2k+2}{k+1} - 2$.
\end{proposition}

\begin{proof}
Write $\mb{v} = \mathcal{A}^{(n)}(-\mb{m}_k)$ and keep the notation $I$, $U$, $L$ of the previous proof; as there, $U \cap L = \emptyset$ and $\#I = 2\,\#(I \cap U)$.

The operator $\mathcal{A}^{(n)}$ raises only the last entry, so $\sigma_i(\mb{v}) = \sigma_i(-\mb{m}_k)$ for $i < n$ while $\sigma_n(\mb{v}) = \sigma_n(-\mb{m}_k) + 2$. Hence for $\mb{x} \in U$,
\[
\mb{x} \not\succeq_Q \mb{v} \Longleftrightarrow \sigma_n(\mb{x}) = \sigma_n(-\mb{m}_k) = 2k+2-n,
\]
that is, $\mb{x}$ has exactly $k+1$ entries equal to $1$, as does $-\mb{m}_k$. Writing $\mb{x} = (\mb{x}_1, \mb{x}_2)$ as above, the vector $\mb{x}_1 \in P(2k+1)$ has at most $k$ entries equal to $-1$, since $-\mb{m}_k \preceq_Q \mb{x}$ gives $\sigma_{2k+1}(\mb{x}) \geq \sigma_{2k+1}(-\mb{m}_k) = 1$; that is, $\mb{x}_1$ has at least $k+1$ entries equal to $1$. Hence the total number of $1$'s is $k+1$ exactly when $\mb{x}_1$ has $k+1$ $1$'s and $\mb{x}_2 = (-1, \ldots, -1)$. Conversely, every $\mb{x}_1 \in P(2k+1)$ with $k+1$ $1$'s satisfies $\mb{x}_1 \succeq (\underbrace{-1, \ldots, -1}_{k}, \underbrace{1, \ldots, 1}_{k+1})$, since placing all $-1$'s first minimizes the cumulative sums; the resulting $\mb{x}$ then satisfies $-\mb{m}_k \preceq \mb{x}$ and lies in $Q(n)$, having $\sigma_{2k+1}(\mb{x}) = 1 > 0$ and $\sigma_n(\mb{x}) = 2k+2-n \leq -2 < 0$, so $\mb{x} \in U$. Thus $\#\{\mb{x} \in U : \mb{x} \not\succeq_Q \mb{v}\} = \binom{2k+1}{k+1} = \binom{2k+1}{k}$.

This set contains $-\mb{m}_k$, which is less than $\mb{v}$; every other member $\mb{x}$ is incomparable to both $\mb{v}$ and $-\mb{v}$. It is incomparable to $\mb{v}$: here $\mb{x} \not\succeq_Q \mb{v}$, and $-\mb{m}_k$ is the only element of $U$ below $\mb{v}$, so $\mb{x} \neq -\mb{m}_k$ gives $\mb{x} \not\preceq_Q \mb{v}$. It is incomparable to $-\mb{v}$: from $\sigma_{2k+1}(-\mb{v}) = -\sigma_{2k+1}(\mb{v}) = -1$ and $\sigma_{2k+1}(\mb{x}) = 1$ we get $\mb{x} \not\preceq_Q -\mb{v}$, while $\sigma_n(-\mb{v}) = -\sigma_n(\mb{v}) = n-2k-4 > 2k+2-n = \sigma_n(\mb{x})$—the inequality holding since $n-2k-4 \geq 0$ by the hypothesis $k \leq \lfloor(n-4)/2\rfloor$—gives $\mb{x} \not\succeq_Q -\mb{v}$. Conversely, any $\mb{x} \in U$ incomparable to $\mb{v}$ satisfies $\mb{x} \not\succeq_Q \mb{v}$, hence belongs to this set and differs from $-\mb{m}_k$. Therefore $\#(I \cap U) = \binom{2k+1}{k} - 1$, and since $\binom{2k+2}{k+1} = 2\binom{2k+1}{k}$ we obtain $\#I = \binom{2k+2}{k+1} - 2$.
\end{proof}

\begin{remark}
The hypotheses are needed: for $\mathcal{S}^{(k,k+1)}(-\mb{m}_k)$ with $k = 1$, and for $\mathcal{A}^{(n)}(-\mb{m}_k)$ at $k = \ell$ when $n$ is even and at $k = \ell-1$ when $n$ is odd, some of the elements counted above become comparable to $-\mb{v}$ or leave $Q(n)$ altogether, and the formulas change. In the boundary case $k=1$ of \Cref{prop:countS}, the count is, for $n \geq 5$,
\[
2^{\,n-2} - 2n + 2,
\]
the deficit $2n-4$ from $2^{n-2k}-2$ arising as follows. For $k=1$ one has $\mb{v} = (1,-1,1,-1,\ldots,-1)$ and $-\mb{v} = (-1,1,-1,1,\ldots,1)$, so $(-\mb{v})_1 = -1$ and $\sigma_1(\mb{x}) = -1$ no longer forces $\mb{x} \not\succeq_Q -\mb{v}$. Among the elements counted in the proof of \Cref{prop:countS}---those of the form $\mb{x} = (-1, 1, 1, \mb{x}_2)$ with $\mb{x}_2 \in P(n-3)$ free---exactly those whose tail $\mb{x}_2$ contains at most one $-1$ now satisfy $\mb{x} \succeq_Q -\mb{v}$ and hence become comparable to $-\mb{v}$; there are $1 + (n-3) = n-2$ such elements and, together with their negatives, they account for the deficit $2(n-2) = 2n-4$. In particular a single cover of the minimal element $-\mb{m}_1$ is incomparable to $\Theta(2^n)$ elements---asymptotically a quarter of $Q(n)$.

For the addition cover, the bound $\lfloor(n-4)/2\rfloor$ on $k$ in \Cref{prop:countA} equals $\ell-1$ for even $n$ and $\ell-2$ for odd $n$; since $\mathcal{A}^{(n)}(-\mb{m}_k)$ exists exactly for $k \neq (n-1)/2$, the only addition covers not treated by \Cref{prop:countA} are the two at the largest admissible indices: $k = \ell-1$ for $n$ odd and $k = \ell$ for $n$ even. There the counts are, respectively,
\[
\binom{2\ell}{\ell} - 2\ell - 2 \qquad\text{and}\qquad 2\binom{2\ell+1}{\ell-1} - 2.
\]
Indeed, for $n$ odd and $k = \ell-1$ one has $\sigma_n(-\mb{v}) = \sigma_n(\mb{x}) = -1$, so the strict inequality $\sigma_n(-\mb{v}) > \sigma_n(\mb{x})$ in the proof of \Cref{prop:countA} degenerates to an equality; the other comparison $\sigma_{2k+1}(-\mb{v}) = -1 < 1 = \sigma_{2k+1}(\mb{x})$ is unaffected, so $\mb{x} \not\preceq_Q -\mb{v}$ still holds, and exactly $\ell$ of the counted elements---those whose first $\ell-1$ entries are $1$ with a single further $1$ among the next $\ell$ positions---now satisfy $\mb{x} \succeq_Q -\mb{v}$ and drop out; together with their negatives, this reduces $\binom{2\ell}{\ell}-2$ by $2\ell$.
For $n$ even and $k = \ell$ one has $\sigma_n(\mb{v}) = 2$, so the elements $\mb{x}$ of the up-set of $-\mb{m}_\ell$ in $P(n)$ with $\mb{x} \not\succeq_P \mb{v}$ are those with $\sigma_n(\mb{x}) = 0$, namely the $\binom{2\ell+1}{\ell}$ vectors $(\mb{x}_1, -1)$ with $\mb{x}_1 \in P(2\ell+1)$ containing exactly $\ell$ entries equal to $-1$. Here the correction is different: such an $\mb{x}$ lies outside $Q(n)$ exactly when all its cumulative sums are nonnegative---when its lattice path is one of the $C_{\ell+1}$ paths of length $2\ell+2$ from $0$ to $0$ never going below $0$, as in the proof of \Cref{prop:Qsize}---and these excluded vectors include every $\mb{x}$ with $\mb{x} \succeq_P -\mb{v}$. Each of the remaining $\binom{2\ell+1}{\ell} - C_{\ell+1} = \binom{2\ell+1}{\ell-1}$ vectors is incomparable to $-\mb{v}$: a negative cumulative sum among its first $2\ell$ entries, where those of $-\mb{v}$ are nonnegative, gives $\mb{x} \not\succeq_Q -\mb{v}$, and $\sigma_{2k+1}(-\mb{v}) = -1 < 1 = \sigma_{2k+1}(\mb{x})$ gives $\mb{x} \not\preceq_Q -\mb{v}$ as before. Arguing as in the proof of \Cref{prop:countA} for the comparisons with $\mb{v}$, all of them except $-\mb{m}_\ell$ itself are incomparable to both $\mb{v}$ and $-\mb{v}$, and their negatives double the count to $2\bigl(\binom{2\ell+1}{\ell-1} - 1\bigr)$. Together with \Cref{prop:countA} this yields the incomparable count of every addition cover.
\end{remark}

\begin{remark}\label{rem:asymp}
The two counts of \Cref{prop:countS,prop:countA} vary oppositely with $k$: the swap count $2^{n-2k}-2$ is $\Theta(2^n)$ for small $k$ and drops to $0$ (for $n$ odd) or $2$ (for $n$ even) at $k = \ell$, while the addition count $\binom{2k+2}{k+1}-2 = \Theta(4^k/\sqrt{k})$ is $0$ at $k = 0$ and grows to $\Theta(2^n/\sqrt{n})$ as $k \to \ell$. Hence every minimal element $-\mb{m}_k$ with $1 \leq k \leq \ell-1$ has a cover incomparable to at least $2^{(1/2-o(1))n}$ elements (the larger of the two counts, which balance near $k = n/4$), and so does $-\mb{m}_\ell$ for $n$ even, through the addition cover $\mathcal{A}^{(n)}(-\mb{m}_\ell)$. For each fixed $k \geq 1$, the swap cover alone is incomparable to a $2^{-2k}(1-o(1))$ fraction of $Q(n)$---a fraction independent of $n$. The covers whose incomparable count stays bounded as $n \to \infty$ are the addition covers $\mathcal{A}^{(n)}(-\mb{m}_k)$ with $k$ fixed and the swap covers with $\ell - k$ fixed; among them, only $\mathcal{A}^{(n)}(-\mb{m}_0)$ and $\mathcal{S}^{(\ell,\ell+1)}(-\mb{m}_\ell)$ are incomparable to at most two elements, and their neighborhoods are recorded in rows (2)--(6) of \Cref{Tab:compara} and exploited in \Cref{cor:extended}; the many-incomparable cover $\mathcal{A}^{(n)}(-\mb{m}_\ell)$ (for $n$ even) is accordingly absent from the first column of \Cref{Tab:compara}, appearing instead as the incomparable pair in row (6). This is a local counterpart of the global width bound $\Theta(2^n/n^{3/2})$, though it measures a different quantity: even for the simplest non-minimal elements, condition~(iii) of the optimality criterion (\Cref{th:opt}) already comprises exponentially many inequalities. (The quantity here is the number of elements incomparable to a single $\mb{v}$, i.e.\ its incomparability degree, not the size of an antichain; the two need not coincide, and indeed $2^{n-2}$ far exceeds the width.)
\end{remark}

The following lemma collects, for the minimal and maximal elements and the elements near them, the data needed to evaluate the optimality criterion of \Cref{th:opt} (and hence to apply \Cref{cor:extended}): the covered elements (used in condition (ii)) and the elements incomparable to both $\mb{v}$ and $-\mb{v}$ (used in condition (iii)).
\begin{lemma}\label{lem:table}
Let $n \geq 6$ and $k \in \{0, 1, \ldots, \ell\}$. For an element $\mb{v} \in Q(n)$ as in the first column of \Cref{Tab:compara},
the elements of $Q(n)$ covered by $\mb{v}$ and the elements of $Q(n)$ incomparable to both $\mb{v}$ and $-\mb{v}$ are given by the second column and the third one of the table.
\end{lemma}
\begin{table}[ht]
\caption{Some structural properties of $Q(n)$.}
\label{Tab:compara}
\small
\setlength{\tabcolsep}{4pt}
  \begin{tabular}{rccc}
    \toprule
    &$\mb{v}$ & the elements covered by $\mb{v}$ & the elements \\
    && & incomparable to $\pm \mb{v}$\\
    \midrule
    (1)&$-\mb{m}_k $ & none & none \\ \hline
    (2)&$\mathcal{A}(-\mb{m}_0)$& $-\mb{m}_0$ & none\\ \hline
    (3)&$\mathcal{S}^{(n-1)}\mathcal{A}(-\mb{m}_0)$
     & $\mathcal{A}(-\mb{m}_0)$ & none\\ \hline
    (4)&$\mathcal{S}^{(n-2)}\mathcal{S}^{(n-1)}\mathcal{A}(-\mb{m}_0)$
     & $\mathcal{S}^{(n-1)}\mathcal{A}(-\mb{m}_0)$ 
     & $\pm \mathcal{A}\mathcal{S}^{(n-1)}\mathcal{A}(-\mb{m}_0)$\\ \hline
    (5)&$\mathcal{S}^{(\ell)}(-\mb{m}_\ell)$, $n$: odd
     & $-\mb{m}_\ell$ & none\\ \hline
    (6)&$\mathcal{S}^{(\ell)}(-\mb{m}_\ell)$, $n$: even
     & $-\mb{m}_\ell$ & $\pm \mathcal{A}(-\mb{m}_\ell)$\\ \hline
    (7)&$\mb{m}_k $ & $-\mathcal{S}^{(k)}(-\mb{m}_k), k\neq 0$  & none \\
     && $-\mathcal{A}(-\mb{m}_k), k\neq (n-1)/2$ & \\ \hline
    (8)&$-\mathcal{A}(-\mb{m}_0)$
     & $-\mathcal{S}^{(n-1)}\mathcal{A}(-\mb{m}_0)$ & none\\ \hline
    (9)&$-\mathcal{S}^{(n-1)}\mathcal{A}(-\mb{m}_0)$
     & $-\mathcal{S}^{(n-2)}\mathcal{S}^{(n-1)}\mathcal{A}(-\mb{m}_0)$ & none\\
     && $-\mathcal{A}\mathcal{S}^{(n-1)}\mathcal{A}(-\mb{m}_0)$ & \\ \hline
    (10)&$\quad-\mathcal{S}^{(n-2)}\mathcal{S}^{(n-1)}\mathcal{A}(-\mb{m}_0)$
     & $\quad -\mathcal{S}^{(n-3)}\mathcal{S}^{(n-2)}\mathcal{S}^{(n-1)}\mathcal{A}(-\mb{m}_0)$ 
     & $\quad\pm \mathcal{A}\mathcal{S}^{(n-1)}\mathcal{A}(-\mb{m}_0)$\\
     && $-\mathcal{A}\mathcal{S}^{(n-2)}\mathcal{S}^{(n-1)}\mathcal{A}(-\mb{m}_0)$ & \\ \hline
    (11)&$-\mathcal{S}^{(\ell)}(-\mb{m}_\ell)$, $n$: odd
     & $-\mathcal{S}^{(\ell-1)}\mathcal{S}^{(\ell)}(-\mb{m}_\ell)$ & none\\
     && $-\mathcal{S}^{(\ell+1)}\mathcal{S}^{(\ell)}(-\mb{m}_\ell)$& \\ \hline
    (12)&$-\mathcal{S}^{(\ell)}(-\mb{m}_\ell)$, $n$: even
     & $-\mathcal{S}^{(\ell-1)}\mathcal{S}^{(\ell)}(-\mb{m}_\ell)$ & $\pm \mathcal{A}(-\mb{m}_\ell)$\\
     && $-\mathcal{S}^{(\ell+1)}\mathcal{S}^{(\ell)}(-\mb{m}_\ell)$ & \\
     && $-\mathcal{A}\mathcal{S}^{(\ell)}(-\mb{m}_\ell)$ & \\
    \bottomrule
  \end{tabular}
  \par\smallskip
  {\footnotesize Note: the operators $\mathcal{A}^{(n)}$ and $\mathcal{S}^{(j, j+1)}$ are abbreviated as $\mathcal{A}$ and $\mathcal{S}^{(j)}$, respectively.}
\end{table}

\begin{figure}[ht]
\centering
\begin{tikzpicture}
  \path[] (-1,1) grid (11, 4);
  \coordinate(v0)at(0,3);
  \coordinate(v1)at(2,3);
  \coordinate(v2)at(4,3);

  \coordinate(v4) at(6,3);
  \coordinate(v3) at(6,2);

  \coordinate(v5) at(8,3);
  \coordinate(v6) at(8,2);
\foreach \i/\j in {
0/1,1/2,2/3,2/4,3/6,4/5,4/6
}{
  \draw[gray](v\i)--(v\j);
  }
\foreach \i in {5,6}
{
  \draw[gray, dashed](v\i)--++(1, 0);
}
\foreach \i in {0, 1, 2, 3, 4, 5, 6
}{
     \fill[](v\i) circle(0.06);
     }
 \foreach \i/\l/\a in {
 0/{$-\mb{m}_0$\\$=(\dots, -1,-1,-1,-1)$}/0,
 1/{$=(\dots, -1,-1,-1,1)$\\$\mathcal{A}(-\mb{m}_0)$}/1,
 2/{$\mathcal{S}^{(n-1)}\mathcal{A}(-\mb{m}_0)$\\$=(\dots, -1,-1,1,-1)$}/0,
 4/{$=(\dots, -1,1,-1,-1)$\\$\mathcal{S}^{(n-2)}\mathcal{S}^{(n-1)}\mathcal{A}(-\mb{m}_0)$}/1,
 5/{$=(\dots, 1,-1,-1,-1)$\\$\mathcal{S}^{(n-3)}\mathcal{S}^{(n-2)}\mathcal{S}^{(n-1)}\mathcal{A}(-\mb{m}_0)$}/3,
 6/{$\mathcal{A}\mathcal{S}^{(n-2)}\mathcal{S}^{(n-1)}\mathcal{A}(-\mb{m}_0)$\\
 $=\mathcal{S}^{(n-2)}\mathcal{A}\mathcal{S}^{(n-1)}\mathcal{A}(-\mb{m}_0)$\\
 $=(\dots, -1,1,-1,1)$}/2,
 3/{$\mathcal{A}\mathcal{S}^{(n-1)}\mathcal{A}(-\mb{m}_0)$\\$=(\dots, -1,-1,1,1)$}/0
}{
  \ifnum \a = 1
   \node[above,align=center,font=\tiny]at(v\i){\l};
  \fi
  \ifnum \a = 0
    \node[below,align=center,font=\tiny]at(v\i){\l};
  \fi
  \ifnum \a = 2
     \node[below right,align=left,font=\tiny]at($(v\i)+(-0.5, 0)$){\l};
  \fi
  \ifnum \a = 3
      \node[above right,align=left,font=\tiny]at($(v\i)+(-0.5, 0)$){\l};
  \fi
 }
\end{tikzpicture}\\[1ex]
{\footnotesize (a) near the minimal element $-\mb{m}_0$}\\[3ex]
\begin{tikzpicture}
  \path[] (-1,2) grid (12,5);
  \coordinate(v0)at(0,3);
  \coordinate(v1)at(2,3);
  \coordinate(v2)at(4,3);

  \coordinate(v4) at(6,4);
  \coordinate(v3) at(6,3);
\foreach \i/\j in {
1/2,2/3,2/4
}{
  \draw[gray](v\i)--(v\j);
  }
\foreach \i in {3,4}
{
  \draw[gray, dashed](v\i)--++(1, 0);
}
\foreach \i in {1,2, 3, 4
}{
     \fill[](v\i) circle(0.06);
     }
 \foreach \i/\l/\a in {
 1/{$= (-1,\ldots,-1, -1, 1, 1,\ldots, 1)$\\$-\mb{m}_\ell$}/1,
 2/{$\mathcal{S}^{(\ell)}(-\mb{m}_\ell)$\\$=(-1,\ldots, -1,1, -1, 1\ldots, 1)$}/0,
 3/{$\mathcal{S}^{(\ell+1)}\mathcal{S}^{(\ell)}(-\mb{m}_\ell)$\\
 $=(-1,\ldots, -1,1, 1, -1\ldots, 1)$\\}/2,
 4/{$=(-1,\ldots, 1,-1, -1, 1\ldots, 1)$\\
 $\mathcal{S}^{(\ell-1)}\mathcal{S}^{(\ell)}(-\mb{m}_\ell)$}/3
}{
  \ifnum \a = 1
   \node[above, font=\tiny, align=center]at(v\i){\l};
  \fi
  \ifnum \a = 0
    \node[below, font=\tiny, align=center]at(v\i){\l};
  \fi
  \ifnum \a = 2
     \node[below right, font=\tiny, align=left]at(v\i){\l};
  \fi
  \ifnum \a = 3
      \node[above right, font=\tiny, align=left]at(v\i){\l};
  \fi
 }
\end{tikzpicture}\\[1ex]
{\footnotesize (b) near $-\mb{m}_\ell$, $n$ odd}\\[3ex]
\begin{tikzpicture}
  \path[] (-1,1) grid (12,5);
  \coordinate(v0)at(0,3);
  \coordinate(v1)at(2,3);
  \coordinate(v2)at(2,3);

  \coordinate(v4) at(4,3);
  \coordinate(v3) at(4,2);

  \coordinate(v7) at(6,4);
  \coordinate(v5) at(6,3);
  \coordinate(v6) at(6,2);
\foreach \i/\j in {
2/3,2/4,3/6,4/5,4/6,4/7
}{
  \draw[gray](v\i)--(v\j);
  }
\foreach \i in {5,6,7}
{
  \draw[gray, dashed](v\i)--++(1, 0);
}
\foreach \i in { 2, 3, 4, 5, 6, 7
}{
     \fill[](v\i) circle(0.06);
     }
 \foreach \i/\l/\a in {
  2/{$=(-1,\dots, -1,-1, 1,1 \dots, 1, -1)$\\$-\mb{m}_\ell$}/1,
  4/{$\mathcal{S}^{(\ell)}(-\mb{m}_\ell)$\\
  $=(-1, \dots,-1, 1, -1, 1,\dots, 1, -1)$}/0,
  3/{$\mathcal{A}(-\mb{m}_\ell)$\\$=(-1, \dots, -1,-1, 1, 1\dots, 1,1)$}/0,
  7/{$=(-1, \dots,1, -1, -1, 1,\dots, 1, -1)$\\
  $\mathcal{S}^{(\ell-1)}\mathcal{S}^{(\ell)}(-\mb{m}_\ell)$}/3,
  5/{$=(-1, \dots,-1, 1, 1, -1,\dots, 1, -1)$\\
  $\mathcal{S}^{(\ell+1)}\mathcal{S}^{(\ell)}(-\mb{m}_\ell)$}/3,
  6/{$\mathcal{A}\mathcal{S}^{(\ell)}(-\mb{m}_\ell)$\\
   $=\mathcal{S}^{(\ell)}\mathcal{A}(-\mb{m}_\ell)$\\
   $=(-1, \dots,-1, 1, -1, 1,\dots, 1, 1)$}/2
}{
  \ifnum \a = 1
   \node[above, font=\tiny,align=center]at(v\i){\l};
  \fi
  \ifnum \a = 0
    \node[below, font=\tiny, align=center]at(v\i){\l};
  \fi
  \ifnum \a = 2
     \node[below right, font=\tiny, align=left]at(v\i){\l};
  \fi
  \ifnum \a = 3
      \node[above right, font=\tiny, align=left]at(v\i){\l};
  \fi
 }
\end{tikzpicture}\\[1ex]
{\footnotesize (c) near $-\mb{m}_\ell$, $n$ even}
\caption{The local Hasse structures used in the proof of \Cref{lem:table} ($\mathcal{S}^{(j)}=\mathcal{S}^{(j,j+1)}$, $\mathcal{A}=\mathcal{A}^{(n)}$); only the coordinates near the affected positions are shown, and dashed edges continue upward.}
\label{fig:local}
\end{figure}

\begin{proof}
Throughout we abbreviate $\mathcal{A} := \mathcal{A}^{(n)}$ and $\mathcal{S}^{(j)} := \mathcal{S}^{(j, j+1)}$, as in the table, and recall that
\[
-\mb{m}_k = (\underbrace{-1, \ldots, -1}_{k}, \underbrace{1, \ldots, 1}_{k+1}, \underbrace{-1, \ldots, -1}_{n-2k-1}).
\]
A vector $\mb{w} \in P(n)$ belongs to $Q(n)$ if and only if its sequence of cumulative sums attains both a positive and a negative value.

\smallskip
\noindent\emph{\textup{(I)} Reduction of covers to $P(n)$.}
By \Cref{cor:coverQ}, the covers in $Q(n)$ are exactly the covers in $P(n)$ whose two elements both lie in $Q(n)$. By \Cref{prop:cover}, the elements covering $\mb{v}$ in $P(n)$ are obtained by applying a single operator---$\mathcal{A}\mb{v}$ when $v_n = -1$, and $\mathcal{S}^{(j)}\mb{v}$ for each $j$ with $(v_j, v_{j+1}) = (-1, 1)$---and the elements covered by $\mb{v}$ are obtained by the inverse operations ($v_n = 1 \to -1$, or some pair $(v_j, v_{j+1}) = (1,-1) \to (-1,1)$); in each case we keep those lying in $Q(n)$.

\smallskip
\noindent\emph{\textup{(II)} Symmetry.}
By \Cref{prop:symQ} negation $\mb{x} \mapsto -\mb{x}$ is an order-reversing bijection of $Q(n)$; being order-reversing, it sends an element of rank $i$ to one of rank $r - i$, where $r = \tfrac{n(n+1)}{2}$ is the maximum rank of $P(n)$. Consequently the elements covered by $-\mb{v}$ are the negatives of those covering $\mb{v}$, and the set of elements incomparable to both $\mb{v}$ and $-\mb{v}$ is unchanged when $\mb{v}$ is replaced by $-\mb{v}$. Thus rows \textup{(8)--(12)}, whose first entries are the negatives of those in rows \textup{(2)--(6)}, follow once the covering elements and the incomparable sets of rows \textup{(2)--(6)} are known; we record the covering elements as we proceed.

\smallskip
\noindent\emph{\textup{(III)} A diamond criterion.}
We use repeatedly the following fact: in a finite poset, if $\mb{y} \prec \mb{x}$, then $\mb{x}$ lies above some element covering $\mb{y}$, since a minimal element of $\{\mb{w} \colon \mb{y} \prec \mb{w} \preceq \mb{x}\}$ covers $\mb{y}$.
Let $-\mb{m}_k = \mb{z}_0, \mb{z}_1, \ldots, \mb{z}_t = \mb{z}$ be such that $\mb{z}_i$ is the unique cover of $\mb{z}_{i-1}$ for $i \in \{1, \ldots, t\}$ and $\rho(\mb{z}) < r/2 - 1$, and let $\mb{z}$ have exactly two covers $\mb{u}, \mb{v}$, where $\mb{u}$ has a unique cover $\mb{b}$ greater than $\mb{v}$. Then the elements of $Q(n)$ incomparable to both $\mb{v}$ and $-\mb{v}$ are exactly $\mb{u}$ and $-\mb{u}$.
The covers $\mb{u}, \mb{v}$ of $\mb{z}$ are distinct, hence incomparable. Moreover $\mb{u}$ and $-\mb{v}$ are incomparable: their ranks $\rho(\mb{u}) = \rho(\mb{v}) < r/2$ and $\rho(-\mb{v}) = r - \rho(\mb{v}) > r/2$ lie on opposite sides of the middle by \textup{(II)}, so the only possible relation is $\mb{u} \preceq_Q -\mb{v}$; but then, $\mb{b}$ being the unique cover of $\mb{u}$, we would have $\mb{v} \preceq_Q \mb{b} \preceq_Q -\mb{v}$, hence $\mb{v} \preceq \mb{0}$, contradicting $\mb{v} \in Q(n)$. By \textup{(II)} the same holds for $-\mb{u}$; thus $\mb{u}$ and $-\mb{u}$ are incomparable to both $\mb{v}$ and $-\mb{v}$.
Conversely, let $\mb{x}$ be incomparable to both $\mb{v}$ and $-\mb{v}$. By \Cref{lem:compara} $\mb{x}$ is comparable to $-\mb{m}_k$ or $\mb{m}_k$, so by \textup{(II)} we may assume $-\mb{m}_k \preceq_Q \mb{x}$; in fact $\mb{z}_0 = -\mb{m}_k \prec_Q \mb{x}$, since $-\mb{m}_k \preceq_Q \mb{z} \prec_Q \mb{v}$ while $\mb{x}$ is incomparable to $\mb{v}$. If $\mb{x} \succ_Q \mb{z}_{i-1}$ for some $i \in \{1, \ldots, t\}$, then $\mb{x}$ lies above some cover of $\mb{z}_{i-1}$, that is, $\mb{x} \succeq_Q \mb{z}_i$; hence $\mb{x}$ is one of $\mb{z}_1, \ldots, \mb{z}_{t-1}$---each less than $\mb{v}$, excluded---or $\mb{x} \succeq_Q \mb{z}$. In the latter case $\mb{x} \neq \mb{z}$ (as $\mb{z} \prec_Q \mb{v}$), so $\mb{x}$ lies above some cover of $\mb{z}$: either $\mb{x} \succeq_Q \mb{v}$, excluded, or $\mb{x} \succeq_Q \mb{u}$, where $\mb{x} \succ_Q \mb{u}$ would force $\mb{x} \succeq_Q \mb{b} \succ_Q \mb{v}$, excluded. Hence $\mb{x} = \mb{u}$.

\smallskip
\noindent\emph{Covers \textup{(all rows)}.}
The covered elements in the second column of \Cref{Tab:compara} follow from rule \textup{(I)}, and \Cref{fig:local} displays the two local structures. Near $-\mb{m}_0$ (\Cref{fig:local}(a)), the elements $-\mb{m}_0$, $\mathcal{A}(-\mb{m}_0)$, and $\mathcal{S}^{(n-1)}\mathcal{A}(-\mb{m}_0)$ form a chain of unique covers (rows \textup{(2)},\textup{(3)}); the last of these has the two covers $\mathcal{S}^{(n-2)}\mathcal{S}^{(n-1)}\mathcal{A}(-\mb{m}_0)$ (row \textup{(4)}) and $\mathcal{A}\mathcal{S}^{(n-1)}\mathcal{A}(-\mb{m}_0)$, and the first of these two is in turn covered by $\mathcal{S}^{(n-3)}\mathcal{S}^{(n-2)}\mathcal{S}^{(n-1)}\mathcal{A}(-\mb{m}_0)$ and $\mathcal{A}\mathcal{S}^{(n-2)}\mathcal{S}^{(n-1)}\mathcal{A}(-\mb{m}_0)$. Near $-\mb{m}_\ell$, the element $-\mb{m}_\ell$ is covered by $\mathcal{S}^{(\ell)}(-\mb{m}_\ell)$ (row \textup{(5)}), and also by $\mathcal{A}(-\mb{m}_\ell)$ when $n$ is even (\Cref{fig:local}(c), row \textup{(6)}); and $\mathcal{S}^{(\ell)}(-\mb{m}_\ell)$ is covered by $\mathcal{S}^{(\ell-1)}\mathcal{S}^{(\ell)}(-\mb{m}_\ell)$ and $\mathcal{S}^{(\ell+1)}\mathcal{S}^{(\ell)}(-\mb{m}_\ell)$, together with $\mathcal{A}\mathcal{S}^{(\ell)}(-\mb{m}_\ell)$ when $n$ is even. By \textup{(II)}, the covered elements in rows \textup{(8)--(12)} are the negatives of the covers just listed: rows \textup{(8)},\textup{(9)},\textup{(10)} negate the covers of $\mathcal{A}(-\mb{m}_0)$, $\mathcal{S}^{(n-1)}\mathcal{A}(-\mb{m}_0)$, and $\mathcal{S}^{(n-2)}\mathcal{S}^{(n-1)}\mathcal{A}(-\mb{m}_0)$, while rows \textup{(11)},\textup{(12)} negate the covers of $\mathcal{S}^{(\ell)}(-\mb{m}_\ell)$.

\smallskip
\noindent\emph{Incomparable sets \textup{(all rows)}.}
\emph{Rows \textup{(1)},\textup{(7)}}: the elements $\pm\mb{m}_k$ are, by \Cref{lem:compara}, comparable to every element, so nothing is incomparable to both.
\emph{Rows \textup{(2)},\textup{(3)}, and Row \textup{(5)} for $n$ odd}: here the row element lies on a unique-cover chain from a minimal element with no branching below it, so every $\mb{x} \succeq_Q -\mb{m}_k$ is comparable to it; with \Cref{lem:compara} and \textup{(II)} the set is empty.
\emph{Row \textup{(4)}}: apply the diamond criterion \textup{(III)} with $\mb{z} := \mathcal{S}^{(n-1)}\mathcal{A}(-\mb{m}_0)$, reached from $-\mb{m}_0$ by a unique-cover chain; its two covers are $\mb{v} := \mathcal{S}^{(n-2)}\mb{z}$ (the row element) and $\mb{u} := \mathcal{A}\mb{z}$, and the unique cover of $\mb{u}$ is $\mathcal{S}^{(n-2)}\mb{u} = \mathcal{A}\mb{v}$, which is greater than $\mb{v}$. Thus \textup{(III)} gives $\{\mb{u}, -\mb{u}\} = \pm\mathcal{A}\mathcal{S}^{(n-1)}\mathcal{A}(-\mb{m}_0)$.
\emph{Row \textup{(6)}} ($n$ even): apply \textup{(III)} with $\mb{z} := -\mb{m}_\ell$ (itself minimal); its two covers are $\mb{v} := \mathcal{S}^{(\ell)}(-\mb{m}_\ell)$ (the row element) and $\mb{u} := \mathcal{A}(-\mb{m}_\ell)$, and the unique cover of $\mb{u}$ is $\mathcal{S}^{(\ell)}\mb{u} = \mathcal{A}\mb{v}$, which is greater than $\mb{v}$. Thus \textup{(III)} gives $\{\mb{u}, -\mb{u}\} = \pm\mathcal{A}(-\mb{m}_\ell)$.
\emph{Rows \textup{(8)--(12)}}: by \textup{(II)} the incomparable sets coincide with those of rows \textup{(2)--(6)}.
All entries of \Cref{Tab:compara} are thereby verified.
\end{proof}

For $n \leq 5$ the poset $Q(n)$ has at most $12$ elements and the neighborhoods of the minimal and maximal elements overlap, so these finitely many cases are immaterial to the polynomial-time conclusion and are omitted.

Combining \Cref{lem:table} with the optimality criterion of \Cref{th:opt} yields polynomially solvable cases for all the tabulated elements, generalizing \Cref{th:min,cor:maximal} to the neighborhoods of the minimal and maximal elements.

\begin{corollary}[Extended polynomially solvable cases]\label{cor:extended}
Let $n \geq 6$ and let $\mb{v}$ be any element listed in the first column of \Cref{Tab:compara}. Denote by $\mb{w}_1, \dots, \mb{w}_s$ the elements covered by $\mb{v}$ and by $\pm\mb{u}_1, \dots, \pm\mb{u}_t$ the elements incomparable to both $\mb{v}$ and $-\mb{v}$, as listed in the second and third columns of the table. If
\begin{romannum}
\item $\mb{v}\cdot \mb{c} \geq 0$,
\item $(\mb{v}+\mb{w}_i)\cdot \mb{c} \leq 0$ for all $i \in \{1, \dots, s\}$, and
\item $(\mb{v}+\mb{u}_j)\cdot \mb{c} \leq 0$ or $(\mb{v}-\mb{u}_j)\cdot \mb{c} \leq 0$ for all $j \in \{1, \dots, t\}$,
\end{romannum}
then the subset $S$ with $\mb{p}(S)=\mb{v}$ is an optimal solution and the optimal value is $\mb{v}\cdot \mb{c}$.
\end{corollary}

Since $s \leq 3$ and $t \leq 1$ for every listed element, this is a condition on a constant number of inequalities; once the instance is sorted, it can be checked in $O(n)$ arithmetic operations, since each inequality is an inner product with a fixed vector. \Cref{th:min,cor:maximal} are the special cases given by rows~(1) and~(7), in which $t=0$ (so condition (iii) is vacuous) and, for row~(1), $s=0$ as well.

\begin{proof}
If $\mb{v}=-\mb{m}_k$ is minimal (row~(1)), then $s=t=0$ and the claim is \Cref{th:min}. Otherwise $\mb{v}$ is non-minimal. By \Cref{lem:table}, the elements covered by $\mb{v}$ are exactly $\mb{w}_1, \dots, \mb{w}_s$ and the elements incomparable to both $\mb{v}$ and $-\mb{v}$ are exactly $\pm\mb{u}_1, \dots, \pm\mb{u}_t$; hence every element of $Q(n)$ other than $\pm\mb{v}$ lies in $N_{\rm C}(\mb{v})$ or equals some $\pm\mb{u}_j$. By conditions (i) and (ii) and \Cref{prop:N}, we have $\mb{v}\cdot\mb{c} \leq \lvert\mb{x}\cdot\mb{c}\rvert$ for every $\mb{x} \in N_{\rm C}(\mb{v})$. For each $j$, condition (iii) gives $\mb{v}\cdot\mb{c} \leq \lvert\mb{u}_j\cdot\mb{c}\rvert = \lvert(-\mb{u}_j)\cdot\mb{c}\rvert$. Therefore $\mb{v}\cdot\mb{c} \leq \lvert\mb{p}(S')\cdot\mb{c}\rvert$ for every $S'$ with $\mb{p}(S')\in Q(n)$, that is, $\mb{v}\cdot\mb{c}$ is the optimal value.
\end{proof}

\begin{example}
Let $n = 6$ and $\mb{c} = (10, 6, 5, 4, 1, 1)$, and consider row (4) of \Cref{Tab:compara}:
\[
\mb{v} = \mathcal{S}^{(4,5)}\mathcal{S}^{(5,6)}\mathcal{A}^{(6)}(-\mb{m}_0) = (1,-1,-1,1,-1,-1),
\]
whose unique covered element is $\mb{w}_1 = (1,-1,-1,-1,1,-1)$ and whose incomparable elements are $\pm\mb{u}_1$ with $\mb{u}_1 = (1,-1,-1,-1,1,1)$. Conditions (i)--(iii) of \Cref{cor:extended} read
$\mb{v}\cdot\mb{c} = 1 \geq 0$;
$(\mb{v}+\mb{w}_1)\cdot\mb{c} = 2(c_1 - c_2 - c_3 - c_6) = -4 \leq 0$;
and $(\mb{v}+\mb{u}_1)\cdot\mb{c} = 2(c_1 - c_2 - c_3) = -2 \leq 0$.
Hence the subset $S$ with $\mb{p}(S) = \mb{v}$, namely $S = \{1, 4\}$, is an optimal solution with optimal value $1$.
\end{example}

\section{Concluding Remarks}\label{sec4}
We have associated two posets with the partition problem: $P(n)$, which is order-isomorphic to the classical poset $M(n)$, and its subposet $Q(n)$. The partial order captures exactly the instance-independent dominance structure of the problem (\Cref{prop:d&order}), the elements of $Q(n)$ are precisely the initial candidate solutions (\Cref{th:sol}), the candidate solutions can be pruned along the partial order (\Cref{th:signQ}), and the optimality criterion of \Cref{th:opt} yields an explicitly described family of polynomially solvable cases (\Cref{cor:extended}).

In retrospect our results are simple; the difficulty lay in finding the right formulation, and two choices turned out to be essential.
The first is to encode subsets by vectors with entries in $\{1, -1\}$ rather than $\{1, 0\}$: with the latter encoding the complementation $S \mapsto [n]\setminus S$ does not correspond to the negation $\mb{v} \mapsto -\mb{v}$, and the symmetry of $Q(n)$ (\Cref{prop:symQ}), which we use throughout, is lost.
The second is to refrain from halving the candidate solutions by normalizations such as requiring every subset to contain the element $1$: such a normalization destroys the sign-opposite pairs $\{\mb{v}, -\mb{v}\}$ on which the pruning rules of \Cref{th:signQ} and the criterion of \Cref{th:opt} rest.

We also record a connection that places the poset $Q(n)$ in a classical context: threshold logic and simple games.
For an instance $\mb{c}$ with $c_1 > \cdots > c_n > 0$ such that $\mb{v}\cdot \mb{c} \neq 0$ for every $\mb{v} \in Q(n)$, the set $F_{\mb{c}} = \{\mb{v} \in Q(n) \colon \mb{v}\cdot \mb{c} > 0\}$ is an up-set of $Q(n)$ by \Cref{prop:d&order}, and it is \emph{antisymmetric}: exactly one of $\mb{v}$ and $-\mb{v}$ belongs to $F_{\mb{c}}$.
The antisymmetric up-sets of $Q(n)$ are in one-to-one correspondence with the self-dual Boolean functions of $n$ variables that are \emph{regular} with respect to the variable order $x_1 \succeq \cdots \succeq x_n$---meaning that the family of subsets on which the function equals $1$ is an up-set of $P(n)$ under the dominance order---\cite{Crama2011,Muroga1971}. Indeed, identify a Boolean function with the family of subsets of $[n]$ on which it takes the value $1$, and each subset $S$ with $\mb{p}(S) \in P(n)$: self-duality states that exactly one of $\mb{v}, -\mb{v}$ is in the family, and regularity states that the family is an up-set of $P(n)$, since $\preceq_P$ coincides with the dominance order $\#(S \cap [k]) \leq \#(T \cap [k])$ for all $k$. An antisymmetric up-set $F$ of $P(n)$ must contain $R_+(n)$ and avoid $R_-(n)$---if $\mb{v} \in R_+(n)$ and $-\mb{v} \in F$, then $-\mb{v} \preceq_P \mb{v}$ puts $\mb{v}$ in $F$ as well, contradicting antisymmetry---so $F$ is exactly $R_+(n) \cup G$ for an antisymmetric up-set $G$ of $Q(n)$, and every such union is an antisymmetric up-set of $P(n)$.
Under this correspondence the up-sets of the form $F_{\mb{c}}$, that is, those realizable by an instance, correspond to the self-dual \emph{threshold} functions, equivalently to the weighted majority games \cite{Taylor1999,Neumann1944}: the subset $S$ wins the game with weights $c_i$ and quota $\frac12\sum_i c_i$ exactly when $\mb{p}(S)\cdot \mb{c} > 0$.
Not every antisymmetric up-set is realizable, however: a classical result of threshold logic states that every regular function of at most eight variables is a threshold function, while from nine variables on this fails \cite{Crama2011,Muroga1971}. Accordingly, the number of antisymmetric up-sets of $Q(n)$ (sequence A109456 in \cite{OEIS}) and the number of realizable ones (sequence A001532 in \cite{OEIS}, the number of weighted majority games) agree for $n \leq 8$, where both count $2, 3, 7, 21, 135, 2470$ for $n = 3, \dots, 8$, and diverge at $n = 9$. In partition-problem terms, since the realizable up-sets are exactly the threshold ones, from $n = 9$ on there exist antisymmetric up-sets of $Q(n)$---globally consistent sign patterns---that no instance $\mb{c}$ realizes. A systematic study of this correspondence---which up-sets the instances of the partition problem realize, and what the realizing instances reveal about the problem---is left for future work.

We conclude with two open problems.
\begin{enumerate}
\item How far can the explicit family of \Cref{lem:table} be extended? More precisely, characterize the elements $\mb{v} \in Q(n)$ for which the number of elements incomparable to both $\mb{v}$ and $-\mb{v}$ is bounded by a constant (or by a polynomial in $n$), so that the criterion of \Cref{th:opt} can be evaluated in polynomial time.
\item Determine the growth order of $\max_{\mb{v} \in Q(n)} \gamma(\mb{v})$, the largest number of candidate solutions guaranteed to be discarded by a single sign test. By \Cref{prop:singlequery} it is $O(2^n/n^{1/4})$, and the values in \Cref{tab:min} are consistent with an order $2^n/n^{\theta}$ for a constant $\theta$; if such an order exists, then $\theta \geq 1/4$ by \Cref{prop:singlequery}, and the exact exponent is unknown to us.
\end{enumerate}

\section*{Acknowledgments}
The author is grateful to Professor Kazuhisa Makino (Kyoto University) for helpful comments and suggestions.

\section*{Declarations}
During the preparation of this work the author used Claude (Anthropic)
to refine the English exposition and to check the clarity of the
presentation. After using this tool, the author reviewed and edited the
content as needed. The author assumes responsibility for all content.

\bibliographystyle{siamplain}
\bibliography{reference}
\end{document}